  \def\x{\endgroup\ExecuteOptions{dvipdfm}}%
        \def\x{\endgroup\ExecuteOptions{pdftex}}%
\newtheorem{fact}{Fact}
\newtheorem{df}{Definition}
\newtheorem{exmp}{Example}
\newcommand{\keywords}[1]{\par\addvspace\baselineskip
\noindent\keywordname\enspace\ignorespaces#1}
\newcommand{\anonymous}[2]{#1} 
\newcommand{\compressaffil}[2]{#1} 
\renewcommand{\S}{{\mathcal{S}}}
\newcommand{\C}{{\mathcal{C}}}
\renewcommand{\P}{\mathbb{P}}
\newcommand{\T}{{\mathcal{T}}}
\newcommand{\ep}{\varepsilon}
\begin{document}

\title{What Do Our Choices Say About Our Preferences?}
\titlerunning{What Do Our Choices Say About Our Preferences?}


\author{%
\anonymous{
\mbox{Krzysztof Grining\inst{1}}, \mbox{Marek Klonowski\inst{2}},  
\mbox{Ma{\l}gorzata Sulkowska\inst{1}}}
{Of anonymous authors}
}

\institute{
\anonymous{ \compressaffil{Department of Fundamentals of Computer Science \\
		Wroc{\l}aw University of Science and Technology \\ \email{\{firstname.secondname\}@pwr.edu.pl}  \vspace{5pt} \and 
		Department of Artificial Intelligence \\
		Wroc{\l}aw University of Science and Technology \\ \email{\{firstname.secondname\}@pwr.edu.pl}
	}{}
}{...
}}

\maketitle

\begin{abstract}

Taking online decisions is a part of everyday life. Think of buying a house, parking a car or taking part in an auction. We often take those decisions publicly, which may breach our privacy - a party observing our choices may learn a lot about our preferences. In this paper we investigate the online stopping algorithms from the privacy preserving perspective, using a mathematically rigorous differential privacy notion. 

In differentially private algorithms there is usually an issue of balancing the privacy and utility. In this regime, in most cases, having both optimality and a high level of privacy at the same time is impossible. We propose a natural mechanism to achieve a controllable trade-off, quantified by a parameter, between the accuracy of the online algorithm and its privacy. Depending on the parameter, our mechanism can be optimal with weaker differential privacy or suboptimal, yet more privacy-preserving. We conduct a detailed accuracy and privacy analysis of our mechanism applied to the optimal algorithm for the classical secretary problem. Thereby the classical notions from two distinct areas - optimal stopping and differential privacy - meet for the first time. 



\keywords{privacy preserving algorithm, optimal stopping, differential privacy, secretary problem}
\end{abstract}

 \section{Introduction}\label{sect:intro}
We make online decisions every day - buying a house, trading stock options or parking a car. The choices we make are mostly based on our knowledge and experience. Those decisions are most often publicly visible and it raises concern about the internal information on which the choice was based. Namely, our choices may somewhat leak that information, which we consider as sensitive. Someone could observe our choices and deduce our preferences or domain knowledge of our algorithmic trading company. In this paper we consider the security of such information in case of optimal stopping algorithms.

The optimal stopping algorithms are widely known and thoroughly researched. One of the most classical models in this area is the \textit{secretary problem}.
There is a set of $n$ linearly ordered candidates and a \textit{selector} would like to choose the best candidate from that set. The caveat is that the candidates appear online one by one in some random order. I.e., the selector cannot see all of them at once and simply pick the best one. They have only one choice and have to make it online. 
The decision must be based on an incomplete knowledge gathered by comparing the current candidate with the previously seen ones, and it is irreversible, as the candidate cannot be hired after he or she was rejected.  The problem was popularized  in 1960 by Martin Gardner in \textit{Scientific American} column under the name of \textit{googol game} and the optimal solution was first written down by Lindley in \cite{Lindley}. The optimal solution to that problem gives the asymptotic probability of $1/e$ for choosing the best candidate (see \cite{Lindley}). 
For a historical overview of the secretary problem consult Ferguson's survey \cite{Ferguson}. Many generalizations of this classical version were considered later on. Stadje was the first one who replaced a linear order of candidates with a partial order (poset) and aimed at choosing any maximal element~\cite{Stadje}. An account of research considering threshold strategies for posets was given by Gnedin in \cite{Gnedin}. Optimal strategies for a particular posets as well as universal algorithms for the whole families of posets have been featured in \cite{bin_tree,Garrod,univ_poset,Kozik,Freij_Wastlund}. The problem was investigated even on much more general structures, like directed graphs \cite{Kubicki_Morayne_2005,Przykucki_Sulkowska_2010,Sulkowska_universal_2012,Goddard_Kubiccy_2013,Grzesik_Morayne_Sulkowska_2015,Benevides_Sulkowska_2017} or matroids \cite{matroid_secretary,matroid_STOC2016,matroid_2018}.
Other interesting extensions consider different payoff functions. A natural reformulation of the classical case is to aim at minimizing the expected rank of the candidate instead of selecting just the best one. This model was introduced by Lindley in \cite{Lindley} but fully solved by Chow et al. in \cite{Chow}. 
Even though formulated probably in the mid-1950s, the secretary problem with its generalizations is still vivid and attracts the attention of theoretical computer science community, see the latest results in \cite{SODA20,SODA21,multi_secr_2022,Albers_TCS_2021,Fujii_2023,secretary_management_2024}.


In this paper we investigate the optimal stopping algorithms from a privacy-protection perspective, which, to the best of our knowledge, has not been done so far.  
Assuming that the online choice we make is publicly visible, it may leak some information about our preferences.
Note that if the choice was possible offline, i.e., we could first see all the candidates and then pick the best one, the leakage would be unavoidable, as the chosen candidate would simply be our best one. But when the choice has to be made in an online regime (we pick or reject the candidates on the fly, without the possibility of revisiting the rejected ones) then on one hand the outcome is not perfectly accurate but on the other our preferences are not that visible. 
The optimal stopping algorithms can be used for example in algorithmic trading. One might care whether the visible action on the market (e.g., closing an American option position) leaks something about an internal knowledge.
Hereby we try to answer the question whether the inherent uncertainty of the online stopping algorithms is enough to sufficiently hide our preferences.

Our analysis of privacy is based on \textit{differential privacy} notion commonly considered as the only state-of-the-art approach. Its idea was introduced by Dwork et al. in~\cite{dwork2006calibrating}, however its precise formulation in the widely used form  appeared in~\cite{Dwork06}. Differential privacy is mathematically rigorous and formally provable contrary to the previous anonymity-derived privacy definitions (for a comprehensive study check~\cite{DworkAlgo} and references therein). Informally, the idea behind differential privacy is as follows: for two "similar" inputs, a differentially private mechanism should provide a response chosen from very similar distributions. In effect, judging by the output of the mechanism one cannot say if a given individual was taken into account for producing a given output, as it cannot distinguish two outputs produced from two data sets differing with only one user. This is mostly done by adding an auxiliary randomness (e.g., a carefully calibrated noise) to data. The vast majority of the fundamental papers (e.g.~\cite{dwork2006our,dwork2009differential,dwork2010differential}) consider centralized model as is assumed in our contribution. In this model a trusted party (called the \textit{curator}) holds a database. 
He is entitled to gather and process all participants' data. The curator also releases the computed statistics to a wider (possibly untrusted) audience. Such an approach to the privacy-preserving protocols can be used to give a formal guarantee for privacy resilience to any form of post-processing.
Analysis of the protocols based on differential privacy is usually technically much more involved compared to previous approaches, but they give immunity against various linkage attacks (see e.g.  \cite{narayanan2009anonymizing,narayanan2010myths}).


\subsection{Results}
We can think of two extremes while making an online decision - one is to act according to the known, optimal algorithm and the other is to act completely randomly. Obviously, the first approach yields the best possible chance to make a correct pick but leaks internal information while the latter does not leak anything but selects a candidate only at random.
Depending on the nature of the problem and the importance of information to be hidden we have to make a trade-off. 

In this paper we present a natural algorithm that is such a trade-off.  It has a steering parameter $p \in [0,1]$: by $p=1$ it is optimal, and by $p=0$ it hides preferences perfectly. We analyze its effectiveness and privacy properties. 
Subsequently, we apply it to the classical secretary problem. 
It turns out that already the optimal secretary algorithm itself ensures some privacy, but rather only in weak metrics. Our approach, based on adding a randomized perturbation, is common in privacy mechanisms. This intuitively obvious algorithm turned out to be non-trivial to analyze. 
This work can be seen as the first step to tackle privacy and information hiding in the optimal stopping algorithms so we focused only on the classical model. \\

\textbf{Main results}
\begin{itemize}
\item We introduce the definition of a \textit{differentially private stopping time} (see Definition~\ref{df:priv_stop} in Section \ref{sect:model}) with a metric defined on a set of orderings representing the selector's preferences (see Definition~\ref{df:metric} in Section \ref{sect:model}).
\item In Theorem~\ref{thm:l_inf} we show a fundamental lower bound for the possibility of constructing privacy preserving algorithms that are close to optimal.
\item We propose a natural mechanism transforming any optimal stopping time for a linearly ordered set of candidates into an algorithm preserving accuracy at a controllable level and having better information-hiding properties. The formulation of the algorithm is presented in Section \ref{sect:results}. General results concerning its accuracy and differential privacy are stated in Section~\ref{sec:general_pmix} (see Fact~\ref{fact:mix_efficiency} and Theorem~\ref{thm:optimal}).
\item We conduct a detailed analysis of our mechanism applied to the optimal stopping algorithm for the classical secretary problem. The results on its accuracy and differential privacy are stated in Section~\ref{sec:secretary_pmix} (see Fact~\ref{fact:efficiency_pmix} and Theorem~\ref{thm:main_eps}). In particular, we obtain the privacy properties of the optimal algorithm for the secretary problem.
\end{itemize}
In order to obtain the privacy results for our mechanism applied to the secretary problem we had to prove a series of collateral properties of the optimal algorithm for the secretary problem (see the Appendix).


 \section{Formal model}\label{sect:model}

\subsection{Stopping time}\label{subsect:general_model}

Let $\mathbb{N}$ be the set of natural numbers. For $n \in \mathbb{N}$ by $[n]$ denote the set $\{1,2,\ldots,n\}$ and by $\S_n$ the set of all permutations of elements from $[n]$, 
thus $|\S_n| = n!$. We write $\C = \{C_1, C_2, \ldots, C_n\}$ for the set of $n$ candidates. Whenever we use a symbol $\sigma \in \S_n$ we interpret it as the ordering of candidates with respect to their qualifications, i.e., $\sigma = (\sigma_1, \sigma_2, \ldots, \sigma_n)$ 
refers to the ordering $C_{\sigma_1} > C_{\sigma_2} > \ldots > C_{\sigma_n}$ (in particular, it means that the candidate $C_{\sigma_1}$ has the best and the candidate $C_{\sigma_n}$ has the worst qualifications from the whole group $\C$). These orderings are called the \textit{qualification orderings}.

Whenever we use a symbol $\tau \in \S_n$ we interpret it as the sequence giving the order in which the candidates appear in some online game. Thus $\tau = (\tau_1, \tau_2, \ldots, \tau_n)$ means, in particular, that the candidate $C_{\tau_1}$ appeared as the first one and the candidate $C_{\tau_n}$ as the last one in our online game. These orderings are called the \textit{time orderings}.

Throughout this paper we refer to the following model of an online stopping problem. Fix $\sigma$ (choose a particular qualification ordering on the set of candidates). Note that, in fact, $\sigma$ is a preference which we want to hide. Assume also that $\tau$ is chosen uniformly at random from $\S_n$. The player knows $\sigma$ but he does not know $\tau$. Candidates from $\C$ appear one by one following the order given by $\tau$. At time $t$, i.e., when $t$ candidates appeared, the player observes the qualification order induced by $\{\tau_1, \tau_2, \ldots, \tau_t\}$. That is, he knows the relative ranks of the candidates seen so far but he does not know their total ranks. At each time step he has to decide whether to continue the game and reveal the next element or to stop the game meaning that he selects the element $\tau_t$. If he decides to reveal another element, he is not allowed to come back to the previous steps of the game. His task is to maximize (or ensure relatively high) probability that the selected candidate belongs to some previously defined set (e.g. is maximal in the whole set $\C$).

Formally, we define a probability space $(\S_n, {\mathcal{P}}, \P)$, where ${\mathcal{P}}$ is the set of all subsets of $\S_n$ and the probability measure is defined by $\P[\{\tau\}] = 1/n!$ for any $\tau \in \S_n$. A \textit{stopping time} is a function $M: \S_n \times \S_n \to \{1, 2, \ldots, n\}$ such that its value, say $t = M(\sigma,\tau)$, depends only on information the player gathered up to time $t$, which is the qualification order induced by $\{\tau_1, \tau_2, \ldots, \tau_t\}$. The value can not depend on any future events. We give a strict formal definition of a stopping time below.
\begin{df}
	Let  $(\S_n, {\mathcal{P}}, \P)$ be a probability space, where ${\mathcal{P}}=2^{\S_n}$ and  $\P[\{\tau\}] = 1/n!$ for any $\tau \in \S_n$. Let ${\mathcal{P}}_1 \subseteq {\mathcal{P}}_2 \subseteq \ldots \subseteq {\mathcal{P}}_n \subseteq {\mathcal{P}}$ be a sequence of $\sigma$-algebras (a~filtration). A random variable $M: \S_n \times \S_n \to \{1, 2, \ldots, n\}$ is a stopping time with respect to a filtration $(\mathcal{P}_t)_{t=1}^{n}$ if, truncating $M$ to any $\sigma \in \S_n$ on the first coordinate, we have that  $(M|_{\sigma})^{-1}(t) \in \mathcal{P}_t$ for all $t \leq n$.
\end{df}
In our case the sets $A$ in $\mathcal{P}_t$ are those with the following property. Fix $\sigma \in \S_n$. If $\tau = (\tau_1, \tau_2, \ldots, \tau_n) \in A$ then for every $\tilde{\tau} \in \S_n$ such that the orders of candidates induced by $\tau$ and $\tilde{\tau}$ are identical up to time $t$ we have $\tilde{\tau} \in A$.

The expression $M(\sigma,\tau)=t$ means that the algorithm $M$ stopped at time $t$, thus selected the candidate $C_{\tau_t}$. Notation for a candidate returned by $M$ while playing on the time ordering $\tau$ is $C_{\tau_{M(\sigma,\tau)}}$. We will often refer to the probability that a candidate returned by $M$ belongs to some subset $S$ of the set of all candidates $\C$, i.e. $\P[C_{\tau_{M(\sigma,\tau)}} \in S]$. In order to clarify the notation we introduce a notion $CM(\sigma,\tau)$ for $C_{\tau_{M(\sigma,\tau)}}$.

\begin{df} \label{df:opt_stop}
Let $\mathcal{M}$ denote the set of all stopping times. We say that $M^{opt}$ is an \textit{optimal stopping time} if
\[
M^{opt} = {\arg\!\max}_{M \in {\mathcal{M}}} {\P[CM(\sigma,\tau) \in D]},
\]
where $D \subseteq \C$ is a set of previously defined candidates, $\sigma \in \S_n$, and $\tau$ is chosen uniformly at random from $\S_n$.
\end{df}

%

We write $f(n) \sim g(n)$ whenever $\lim_{n \to \infty} f(n)/g(n) = 1$.

\subsection{Secretary problem}\label{subsect:secretary}

From now on $M^*$ always denotes the optimal stopping time for the classical \textit{secretary problem}. In the secretary problem the player aims at maximizing the probability of selecting the candidate which is the best from the whole $\C$. That is, for a particular $\sigma \in \S_n$, the set $D$ from Definition \ref{df:opt_stop} is given by $D = \{C_{\sigma_1}\}$. A full solution to this problem (the optimal algorithm and its probability of success), as already mentioned in the introduction, is well known \cite{Lindley,Ferguson}. The asymptotic results are also established. We present them below.

\begin{df}
Let $\sigma \in \S_n$ be the qualification ordering of the elements from $\C$. Let the value $t_n$ be a so-called threshold of the algorithm. The \textit{optimal algorithm $M^*$ for the secretary problem} (the one maximizing $\P[CM(\sigma,\tau) = C_{\sigma_1}]$ over $M \in {\mathcal{M}}$ assuming that $\tau$ is chosen uniformly at random from $\S_n$) is defined as follows. For any $\tau \in \S_n$ we have $M^*(\sigma, \tau) = k$ if and only if
\begin{enumerate}[(1)]
	\item $k > t_n-1$ and
	\item $\tau_k$ is the maximal element in the qualification ordering induced by $\{\tau_1, \tau_2, \ldots, \tau_k\}$ and
	\item for $i \in  \{t_n, \ldots, k-1\}$ the element $\tau_i$ is not maximal in the qualification ordering induced by $\{\tau_1, \ldots, \tau_i\}$.
\end{enumerate}
If $\tau$ is such that the above three conditions are never altogether satisfied, then $M^*(\sigma, \tau) = n$.
\end{df}

\begin{fact} \label{fact:threshold}
The threshold $t_n$ of the optimal algorithm $M^*$ for the classical secretary problem with $n$ candidates is defined as the smallest integer $t$ for which
\[
\frac{1}{t} + \frac{1}{t+1} + \ldots + \frac{1}{n-1} \leqslant 1.
\]
We have $t_n \sim n/e$ (consult \cite{Lindley}).
\end{fact}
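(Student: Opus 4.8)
The plan is to write down the exact success probability of the threshold-$t$ rule, use a one-line telescoping identity to see that this probability is unimodal in $t$, read off the maximizer as the threshold described in the statement, and finally estimate $t_n$ by comparing a harmonic tail with a logarithm. \textbf{Step 1 (exact probability).} For a threshold $t\ge 2$ let $P_n(t):=\P[\tau_{M}=\sigma_1]$ for the rule $M$ that rejects $\tau_1,\dots,\tau_{t-1}$ and then selects the first later applicant that is a record. Condition on the time position $k$ of $\sigma_1$; it is uniform on $[n]$, and given $k\ge t$ the rule selects $\sigma_1$ exactly when the maximum of $\tau_1,\dots,\tau_{k-1}$ sits among the first $t-1$ slots, which happens with probability $(t-1)/(k-1)$, while for $k<t$ the rule cannot select $\sigma_1$. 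Summing over $k$,
\[
P_n(t)\;=\;\frac1n\sum_{k=t}^{n}\frac{t-1}{k-1}\;=\;\frac{t-1}{n}\sum_{j=t-1}^{n-1}\frac1j,
\]
while $P_n(1)=1/n$ trivially.

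\textbf{Step 2 (optimal threshold).} Put $S(t):=\sum_{j=t}^{n-1}1/j$. Using $S(t-1)=\tfrac1{t-1}+S(t)$ one obtains the telescoping identity
\[
P_n(t+1)-P_n(t)\;=\;\frac{t}{n}S(t)-\frac{t-1}{n}S(t-1)\;=\;\frac1n\bigl(S(t)-1\bigr),
\]
valid for all $t\ge 1$. Since $t\mapsto S(t)$ is strictly decreasing, the difference is positive while $S(t)>1$ and negative once $S(t)<1$, so $\bigl(P_n(t)\bigr)_t$ increases and then decreases and is maximized at the smallest $t$ with $S(t)\le 1$, i.e.\ at the $t_n$ of the statement. (If $S(t_n)=1$ occurs, then $t_n$ and $t_n+1$ tie and the smaller one is still optimal, so the definition is consistent.) This is exactly the claim that $M^*$ with threshold $t_n$ realizes the optimal stopping time of the preceding definition.

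\textbf{Step 3 (asymptotics).} The integral comparison $\int_{t}^{n}\frac{dx}{x}\le S(t)\le \int_{t-1}^{n-1}\frac{dx}{x}$ gives $\ln\frac{n}{t}\le S(t)\le \ln\frac{n-1}{t-1}$ for $2\le t\le n$. From $S(t_n)\le 1$ we get $\ln(n/t_n)\le 1$, hence $t_n\ge n/e$; from the minimality of $t_n$ we have $S(t_n-1)>1$, hence $\ln\frac{n-1}{t_n-2}>1$ and $t_n<\frac{n-1}{e}+2$. Thus $\frac ne\le t_n<\frac ne+\bigl(2-\tfrac1e\bigr)$, which yields $t_n\sim n/e$. \textbf{Where the work is.} Nothing here is deep, and an alternative is simply to cite \cite{Lindley}; the only points that need care are getting the index ranges right in Step~1 (in particular the degenerate threshold $t=1$ and the conditioning event ``the running maximum of the first $k-1$ applicants lies in the first $t-1$ slots''), and, in Step~3, applying the one-sided integral bounds only at valid endpoints so that the squeeze $n/e\le t_n<n/e+O(1)$ is rigorous rather than merely heuristic.
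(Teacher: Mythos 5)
Your derivation is correct, and it is worth noting that the paper itself offers no proof of this fact at all---it simply cites Lindley---so your self-contained argument (the exact success probability $P_n(t)=\frac{t-1}{n}\sum_{j=t-1}^{n-1}\frac1j$ by conditioning on the position of $\sigma_1$, the telescoping identity $P_n(t+1)-P_n(t)=\frac1n\bigl(S(t)-1\bigr)$ giving unimodality, and the harmonic-tail integral bounds giving $n/e\le t_n< n/e+O(1)$) is exactly the standard textbook route that the citation stands in for, and all three steps check out. One caveat: your Step~2 establishes that $t_n$ is the best threshold \emph{among threshold rules}, so the closing sentence ``this is exactly the claim that $M^*$ with threshold $t_n$ realizes the optimal stopping time'' overreaches slightly --- showing that the overall optimum in $\mathcal{M}$ is attained by a threshold rule requires the separate (classical) backward-induction/monotone-case argument, which you neither give nor cite; since the Fact as stated only characterizes $t_n$ and asserts $t_n\sim n/e$, this does not invalidate your proof of the stated claim, but you should either restrict the claim accordingly or add the citation for threshold-optimality. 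A cosmetic point: the displayed chain in Step~2 uses $S(t-1)=\frac1{t-1}+S(t)$ and $P_n(t)=\frac{t-1}{n}S(t-1)$, so it is literally valid only for $t\ge 2$; the case $t=1$ should be checked directly from $P_n(1)=1/n$ (it does hold), rather than asserting the middle expression for all $t\ge1$.
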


\begin{fact} \label{fact:kth_cand}
Fix $\sigma \in \S_n$ and assume that $\tau$ is chosen uniformly at random from $\S_n$. The probability that the optimal algorithm $M^*$ selects the $k\textsuperscript{th}$ best candidate is given by
\[
\P[CM^*(\sigma, \tau) = C_{\sigma_k}] = \frac{t_n-1}{n} \left( \sum_{i=t_n}^{n-k+1}\frac{\binom{n-k}{i-1}}{\binom{n-1}{i-1}}\frac{1}{i-1} + \frac{1}{n-1} \right),
\]
where $t_n$ is the threshold from Fact \ref{fact:threshold} (consult \cite{Rogerson} or see Theorem \ref{thm:kth_cand} in the Appendix). In particular, the probability that it wins (selects the best candidate) is
\[
\P[CM^*(\sigma, \tau) = C_{\sigma_1}] = \frac{t_n-1}{n}\sum_{i=t_n}^{n} \frac{1}{i-1} \xrightarrow{n \rightarrow \infty} 1/e \approx 0.37.
\]
In general, when $k$ is a constant or a function of $n$ such that $k(n)=o(n)$
\[
\P[CM^*(\sigma, \tau) = C_{\sigma_k}] \sim \frac{1}{e} \sum_{s=k}^{\infty} \frac{1}{s} \left(1-\frac{1}{e}\right)^s
\]
(consult \cite{Rogerson} or see Theorem \ref{thm:kconst_limit} in the Appendix).
\end{fact}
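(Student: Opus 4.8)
The plan is to prove the three assertions by a single direct combinatorial computation, conditioning on the arrival position of $\sigma_k$, and then to extract the two asymptotic statements from the resulting exact formula. Fix $\sigma$. Condition on the event $\{\tau_i=\sigma_k\}$, i.e.\ that $\sigma_k$ arrives at time $i$; this has probability $1/n$, and conditionally the remaining $n-1$ candidates occupy the remaining $n-1$ slots uniformly at random. The event $\{CM^*(\sigma,\tau)=\sigma_k\}$ then splits into two disjoint sub-events according to how $M^*$ terminates.

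In case (A), $M^*$ actually stops at time $i$. By the definition of $M^*$ this requires $i\ge t_n$, that $\tau_i=\sigma_k$ be a left-to-right maximum of $\tau_1,\dots,\tau_i$ (hence every earlier arrival lies in the set $\{\sigma_{k+1},\dots,\sigma_n\}$ of the $n-k$ candidates worse than $\sigma_k$, which forces $i-1\le n-k$), and that none of $\tau_{t_n},\dots,\tau_{i-1}$ be a left-to-right maximum. Given $\tau_i=\sigma_k$, the probability that the first $i-1$ slots are filled entirely from the $n-k$ worst candidates is $\binom{n-k}{i-1}/\binom{n-1}{i-1}$, and conditionally on that, the probability that the largest of those $i-1$ elements sits among the first $t_n-1$ slots is $(t_n-1)/(i-1)$ (well-defined since $i\ge t_n$). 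Summing $\tfrac1n\cdot\tfrac{\binom{n-k}{i-1}}{\binom{n-1}{i-1}}\cdot\tfrac{t_n-1}{i-1}$ over $i=t_n,\dots,n-k+1$ gives the first term in Fact~\ref{fact:kth_cand}. In case (B), $M^*$ never stops before time $n$ and returns $\tau_n=\sigma_k$ (so $i=n$); this requires no left-to-right maximum among positions $t_n,\dots,n-1$, i.e.\ for $k\ge2$ the overall best $\sigma_1$ must sit among the first $t_n-1$ slots, of conditional probability $(t_n-1)/(n-1)$, contributing the additive term $\tfrac{t_n-1}{n}\cdot\tfrac1{n-1}$. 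Cases (A) and (B) are disjoint and exhaustive, which yields the exact formula. For $k=1$ the two cases merge: $\sigma_1$ is a record whenever it arrives, so case (A) now runs up to $i=n$ and case (B) is empty, giving $\P[CM^*(\sigma,\tau)=\sigma_1]=\tfrac{t_n-1}{n}\sum_{i=t_n}^n\tfrac1{i-1}$.

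For the limit of the winning probability I would invoke Fact~\ref{fact:threshold}. Writing $\sum_{i=t_n}^n\tfrac1{i-1}=\sum_{j=t_n-1}^{n-1}\tfrac1j$ and using the defining inequalities of $t_n$, namely $\sum_{j=t_n}^{n-1}\tfrac1j\le1<\sum_{j=t_n-1}^{n-1}\tfrac1j$, this sum lies in $\bigl(1,\,1+\tfrac1{t_n-1}\bigr]$ and hence tends to $1$; combined with $t_n\sim n/e$ this gives $\P[CM^*(\sigma,\tau)=\sigma_1]\to 1/e$.

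Finally, for the $k=o(n)$ asymptotics I would pass from the exact sum to an integral. The term $\tfrac{t_n-1}{n}\cdot\tfrac1{n-1}=O(1/n)$ drops out, and $\tfrac{t_n-1}{n}\to\tfrac1e$. Writing $m=i-1$ one has $\tfrac{\binom{n-k}{m}}{\binom{n-1}{m}}=\prod_{j=1}^{k-1}\tfrac{n-j-m}{n-j}$, which for $m\approx cn$ and $k=o(n)$ is asymptotically $(1-c)^{k-1}$, while $\tfrac1{m}\approx\tfrac1{cn}$; the sum over $i$ thus behaves like a Riemann sum for $\tfrac1e\int_{1/e}^{1}\tfrac{(1-c)^{k-1}}{c}\,dc$. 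Substituting $u=1-c$ and expanding $\tfrac1{1-u}=\sum_{j\ge0}u^j$ (interchange justified by monotone convergence) turns this into $\tfrac1e\sum_{j\ge0}\tfrac{(1-1/e)^{k+j}}{k+j}=\tfrac1e\sum_{s\ge k}\tfrac1s\bigl(1-\tfrac1e\bigr)^s$, which is the claimed equivalence. The main obstacle is rigour in this last step: since $k$ may grow with $n$, both sides tend to $0$, so one cannot argue by a mere pointwise limit but must control the binomial-ratio approximation $\log\prod_{j=1}^{k-1}\tfrac{n-j-m}{n-j}-(k-1)\log(1-m/n)$ and the sum-to-integral passage uniformly over the relevant range of $i$, together with a Laplace-type localisation near $c=1/e$ to reproduce the $\Theta\!\bigl(k^{-1}(1-1/e)^k\bigr)$ leading behaviour; the remaining estimates on harmonic sums and binomial ratios are routine.
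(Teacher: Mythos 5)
Your derivation of the exact formula is a genuinely self-contained argument, which the paper itself never supplies: Fact~\ref{fact:kth_cand} and the Appendix Theorems~\ref{thm:kth_cand} and~\ref{thm:kconst_limit} are justified only by a pointer to Rogerson. Your route — condition on the arrival time $i$ of $\sigma_k$, split into a genuine stop at time $i$ (contributing $\frac1n\cdot\frac{\binom{n-k}{i-1}}{\binom{n-1}{i-1}}\cdot\frac{t_n-1}{i-1}$ for $t_n\le i\le n-k+1$) and the default pick of $\tau_n$ (contributing $\frac1n\cdot\frac{t_n-1}{n-1}$ when $k\ge2$) — is correct, including the equivalences "no record in positions $t_n,\dots,i-1$ iff the best of the first $i-1$ sits among the first $t_n-1$ slots" and "default iff $\sigma_1$ sits among the first $t_n-1$ slots". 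Your treatment of $k=1$ (the cases merge, the default case is empty) is in fact the consistent reading: the displayed general formula is literally valid for $2\le k\le n$, while at $k=1$ the correct value is the second display, without the extra $\frac{1}{n-1}$ term — a point the paper's "in particular" glosses over. The limit $\frac{t_n-1}{n}\sum_{i=t_n}^{n}\frac1{i-1}\to 1/e$ via the defining inequalities of $t_n$ in Fact~\ref{fact:threshold} is also fine.

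The gap is in the third claim. Your Riemann-sum/series computation does prove the asymptotics for every fixed $k$, but the uniformity you defer as "routine" cannot be supplied in the full range $k(n)=o(n)$, because the asserted equivalence itself breaks down there: your own exact formula gives $r_{k,n}\ge\frac{t_n-1}{n(n-1)}\sim\frac{1}{en}$ for every $k\ge2$ (the default term), whereas $\frac1e\sum_{s\ge k}\frac1s\bigl(1-\frac1e\bigr)^s=\Theta\bigl(k^{-1}(1-1/e)^k\bigr)$ is $o(1/n)$ as soon as $k\ge C\log n$ for a suitable constant $C$; the ratio of the two sides then tends to infinity, not to $1$ (compare Lemma~\ref{lemma:k_linear}, where for linear $k$ the answer is exactly the default term $\sim\frac1{en}$). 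Moreover, for $\sqrt n\ll k=o(n)$ the replacement of $\binom{n-k}{m}/\binom{n-1}{m}$ by $(1-m/n)^{k-1}$ near $m\approx n/e$ already costs a factor $e^{-\Theta(k^2/n)}$, so even the local approximation you plan to control is not asymptotically exact there. Hence your plan can only be completed for bounded $k$ (which is all the rest of the paper actually uses, since $l$ is constant) or for $k$ growing at most logarithmically; as a proof of the statement for all $k(n)=o(n)$ it cannot be repaired, and the statement itself should carry that restriction.
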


\subsection{Differential privacy}

In this subsection we recall the definition of \textit{differential privacy} and present the privacy model used throughout the paper. For more details about differential privacy see e.g.~\cite{DworkAlgo}.

We assume that there exists a trusted curator who holds data of individuals in the database. 
A \textit{privacy mechanism} is an algorithm, used by the curator, that takes as an input a database and produces an output (a release) using randomization.  
By $\mathcal{X}$ we denote the space of all possible rows in a database (each row consists of data of some individual). The privacy mechanism has a domain $\mathbb{N}^{|\mathcal{X}|}$ representing the set of databases. Thus each database is represented as an $|\mathcal{X}|$-tuple $(n_1,n_2,\ldots,n_{|\mathcal{X}|})$, where $n_k$ is interpreted as the number of rows of kind $k$ in this database. If $x=(n_1,n_2,\ldots,n_{|\mathcal{X}|})$ then $n_1+n_2+\ldots+n_{|\mathcal{X}|}$ is the number of rows in $x$. The goal is to protect the data of every single individual, even if all the users except one collude with the Adversary to breach the privacy of this single, uncorrupted user.   

\begin{df}[Differential Privacy, \cite{DworkAlgo}]\label{dpDef}
Let $\varepsilon \geq 0$ and $\delta \in [0,1]$. A randomized algorithm $M$ with the domain $\mathbb{N}^{|\mathcal{X}|}$ is $(\varepsilon,\delta)$-differentially private, if for all $S \subseteq$ Range($M$) and for all $x, y \in \mathbb{N}^{|\mathcal{X}|}$ such that $\left\Vert x-y \right\Vert_1 \leqslant 1$ the following condition is satisfied:
$$
\P[{M(x) \in S}] \leqslant e^\varepsilon\cdot \P[{M(y) \in S}] + \delta,
$$
where the probability space is over the outcomes of $M$ and $\left\Vert \cdot \right\Vert_1$ denotes the standard $l_1$ norm.
\end{df}
The intuition behind the $(\varepsilon,\delta)$-differential privacy is that if we choose two consecutive databases (that differ exactly on one record), then the mechanism is very likely to return indistinguishable values. Speaking informally, it preserves privacy with high probability, thus the outcome could be distinguishable, thus not privacy-preserving, only with probability at most $\delta$.

In this paper we consider the optimal stopping algorithms and by the database we understand a permutation of the set of $n$ choices (e.g., candidates in the secretary problem). Rather than hiding the participation of a candidate in a stopping game, we wish to hide the preferences of the selector. Indeed it is the preference that is sensitive, not the participation itself. For example, in financial markets the set of candidates is publicly known (say, the prices of a stock). On the other hand, our reaction to that set based on the underlying assumptions and the domain knowledge, which are connected with the preferences permutation, is sensitive and needs protection.

Below we present a differential privacy definition reformulated for our purposes.

\begin{df} \label{df:priv_stop}
Let $\varepsilon \geq 0$ and $\delta \in [0,1]$. Let $(\S_n,d)$ be a metric space. A randomized algorithm $M$ with the domain $\S_n \times \S_n$ is an $(\ep,\delta)$-differentially private stopping time w.r.t. metric $d$ if for all $S \subseteq$ Range($M$) and for all $\sigma, \rho \in \S_n$ such that $d(\sigma,\rho) \leqslant 1$ we have
\[
\P[CM(\sigma, \tau) \in S] \leqslant e^{\ep} \P[CM(\rho, \tau) \in S] + \delta,
\]
where $\tau$ is chosen uniformly at random from $\S_n$.
\end{df}

All the probabilities are calculated with respect to the fact that $\tau$ is chosen uniformly at random from $\S_n$. Note that this definition does not guarantee privacy in all cases (i.e., in the worst-case scenario). Indeed, with some small and controllable probability $\delta$ the information leakage violating privacy is possible. In our case this probability is computed over a random order of candidates that are considered subsequently. Let us stress that in the gross of realistic scenarios the negligible but nonzero $\delta$ probability of privacy failure is indispensable since some input characteristics reveal some extra information. Many examples with interesting interplay between parameters $\varepsilon$ and $\delta$ may be found in \cite{DworkAlgo}. For the example of $(0,0)$-differentially private stopping time regardless of the metric check Definition~\ref{df:blind} and Fact~\ref{fact:uniform_private}. From now on, for short, we sometimes write $(\ep,\delta)$-DP for an $(\ep,\delta)$-differentially private stopping time. 

We introduce a definition of a parametrized metric on $\S_n$. 
\begin{df} \label{df:metric}
For $l \in \{1, 2, \ldots, n-1\}$ let $d_l:\S_n \times \S_n \rightarrow [0,\infty)$. We say that $d_l$ is an \textit{$l$-distance} between the permutations $\sigma, \rho \in \S_n$ if
\[
d_l(\sigma,\rho) = \min \{k: \sigma = \pi_1 \circ \pi_2 \circ \ldots \circ \pi_k \circ \rho\},
\]
where $\pi_i \in \S_n$ and each $\pi_i$ is a transposition of the elements being at most $l$ apart in the permutation $\pi_{i+1} \circ \ldots \circ \pi_k \circ \rho$.
\end{df}


\begin{exmp}
	Let $\sigma=(1,2,3, \ldots, n-1, n)$ and $\rho=(n, 2, 3, \ldots, n-1, 1)$. We have $d_1(\sigma, \rho) = 2n-3$ since we need to make at least $2n-3$ swaps of the neighboring elements in order to obtain $\sigma$ from $\rho$:
	\[
	\sigma = (n~n-1) \ldots  (n~3) \circ (n~2) \circ (1~n) \ldots (1~n-2) \circ (1~n-1) \circ \rho.
	\]
	However, $d_{n-1}(\sigma, \rho) = 1$ since we need just one swap of the elements $n-1$ apart in order to obtain $\sigma$ from $\rho$:
	\[
	\sigma = (1~n) \circ \rho.
	\]
\end{exmp}

\begin{fact}
Let $l \in \{1, 2, \ldots, n-1\}$. It is easy to show that $(S_n,d_l)$ is a metric space.
\end{fact}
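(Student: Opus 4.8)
The plan is to check, for the function $d_l$, the four defining properties of a metric on $\S_n$. First I would fix convenient terminology: call a transposition $\pi$ \emph{$l$-admissible for a permutation $P$} if the two elements it swaps occupy positions of $P$ that are at most $l$ apart, so that $d_l(\sigma,\rho)$ is the least length of a chain $\sigma=\pi_1\circ\cdots\circ\pi_k\circ\rho$ in which each $\pi_i$ is $l$-admissible for the partial product $\pi_{i+1}\circ\cdots\circ\pi_k\circ\rho$ on which it acts. Non-negativity is then immediate, and $d_l(\sigma,\rho)=0\iff\sigma=\rho$ follows from the empty chain. Finiteness, which is what makes $d_l$ well defined and real-valued, holds because $l\geqslant 1$ makes every swap of two neighbouring elements $l$-admissible for every permutation, and adjacent transpositions generate $\S_n$; hence some admissible chain from $\rho$ to $\sigma$ always exists.

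For symmetry the key observation — which I expect to be the one genuinely non-mechanical point — is that whenever $\pi$ is $l$-admissible for a permutation $P$, it is also $l$-admissible for $\pi\circ P$: left-multiplication by the transposition $\pi$ merely exchanges the positions of the two transposed elements and fixes all others, so they remain at positions at most $l$ apart. Granting this, I would take a shortest admissible chain $\sigma=\pi_1\circ\cdots\circ\pi_k\circ\rho$ and set $Q_j=\pi_j\circ\cdots\circ\pi_k\circ\rho$ (so $Q_{k+1}=\rho$, $Q_1=\sigma$, and $\pi_j$ is $l$-admissible for $Q_{j+1}$). Since each $\pi_i$ is an involution, $\rho=\pi_k\circ\pi_{k-1}\circ\cdots\circ\pi_1\circ\sigma$; in this reversed chain the transposition $\pi_j$ now acts on $Q_j=\pi_j\circ Q_{j+1}$ rather than on $Q_{j+1}$, and by the observation it is still $l$-admissible there. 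Hence the reversed chain is admissible of the same length $k$, so $d_l(\rho,\sigma)\leqslant d_l(\sigma,\rho)$, and interchanging $\sigma$ and $\rho$ gives equality.

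For the triangle inequality I would just concatenate shortest admissible chains: from $\sigma=\pi_1\circ\cdots\circ\pi_a\circ\rho$ and $\rho=\pi'_1\circ\cdots\circ\pi'_b\circ\tau$ one gets $\sigma=\pi_1\circ\cdots\circ\pi_a\circ\pi'_1\circ\cdots\circ\pi'_b\circ\tau$. Each $\pi'_j$ acts on the same partial product of $\tau$ as in the second chain, so it stays $l$-admissible; each $\pi_i$ acts on $\pi_{i+1}\circ\cdots\circ\pi_a\circ(\pi'_1\circ\cdots\circ\pi'_b\circ\tau)=\pi_{i+1}\circ\cdots\circ\pi_a\circ\rho$, which is precisely the permutation for which $\pi_i$ was $l$-admissible in the first chain. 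Thus the concatenation is an admissible chain of length $a+b$, giving $d_l(\sigma,\tau)\leqslant d_l(\sigma,\rho)+d_l(\rho,\tau)$. The only step that is not pure bookkeeping with composition order is the position-preservation observation invoked for symmetry; once it is in place, everything else is routine.
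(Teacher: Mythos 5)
Your proof is correct. The paper itself offers no argument for this fact (it is simply asserted as easy), so there is nothing to compare against; your verification is exactly the routine one the authors presumably had in mind. The one point that genuinely needs checking is the one you isolate: a transposition that is $l$-admissible for $P$ remains $l$-admissible for $\pi\circ P$, because left-composition by $\pi=(x\ y)$ merely makes $x$ and $y$ trade places and fixes every other entry, so the relevant pair of positions is unchanged; this gives symmetry by reversing a shortest chain, and finiteness, identity of indiscernibles, and the triangle inequality (concatenation of chains, with each factor still acting on the same partial product) follow as you describe. Only a cosmetic remark: you reuse $\tau$ for the third permutation in the triangle inequality, which collides with the paper's notation for time orderings, so a different letter would be cleaner.
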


Note that $d_{n-1}$ is the strongest metric as it treats two permutations with any pair of swapped elements as neighboring. On the other hand we have $d_1$ -- here permutations are neighboring only if it is a pair with the neighboring elements swapped. 
From the privacy perspective point of view these metrics significantly differ. A privacy mechanism using $d_{n-1}$, intuitively, hides our preferences completely. It should e.g. output a similar outcome even for the qualification orderings $\sigma$ and $\tilde{\sigma}$, where $\tilde{\sigma}$ swaps in $\sigma$ the best candidate with the worst one. In general, $d_l$ metric hides preferences for up to $l$ distance. I.e., say that a picked candidate was in reality $k^{\text{th}}$ on the full preference list. Then from the Adversary perspective he or she could have been between $k-l$ and $k+l$ on the preference list. 

Note that if an algorithm is private in stronger metrics, the Adversary cannot really guess the preferences of the selector, despite knowing the picked candidate. Intuitively, constructing such an algorithm having a high probability of success seems hard to achieve, as then the final choice should say almost nothing about the selector's preferences. Indeed, in the next section we prove that achieving a reasonable differential privacy parameters ($\varepsilon$ and $\delta$) and a constant probability of success in case of metrics $d_l$ such that $l=l(n) \xrightarrow{n \rightarrow \infty} \infty$ is impossible. 

\section{General results}

In this section we present two general results for the  $(\varepsilon,\delta)$-DP stopping times with the metric $d_l$. In the first one we refer to the problem of choosing the best candidate, thus for a fixed $\sigma \in \S_n$ the set $D$ from Definition \ref{df:opt_stop} is given by $D = \{C_{\sigma_1}\}$. This result tells that by $l=l(n) \xrightarrow{n \to \infty} \infty$ it is impossible to construct an algorithm with a constant probability of success having reasonable privacy parameters. The second result constitutes the lower bound for $\varepsilon$ for a given optimal stopping time and $\delta$.

\begin{theorem} \label{thm:l_inf}
Let $\sigma \in \S_n$, and let $\varepsilon \geq 0$ and $\delta \in [0,1]$. Consider a metric space $(\S_n,d_l)$ for $l = l(n)\xrightarrow{n \to \infty} \infty$. 
Let $M$ be a stopping time $M$ such that $\P[CM(\sigma,\tau) = C_{\sigma_i}]>0$ for all $i \in \{1,2,\ldots,n\}$, and $\liminf_{n \to \infty} \left( \P[CM(\sigma,\tau) = C_{\sigma_1}] - \delta \right) = c >0$.
If $M$ is $(\ep,\delta)$-DP w.r.t. $d_l$, then $\P[CM(\sigma,\tau) = \sigma_1] \xrightarrow{n \to \infty} 0 $ or $\ep \xrightarrow{n \to \infty} \infty$.
\end{theorem}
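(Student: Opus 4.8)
\medskip\noindent\textbf{Proof plan.}
The plan is to reduce the $(\ep,\delta)$-DP condition to one inequality about the ``rank-selection probabilities'' of $M$ and then average it over many positions. For a stopping time $M$ and $b\in[n]$ put $p_b:=\P[CM(\sigma,\tau)=\sigma_b]$. The first step is to observe that $p_b$ does not depend on the chosen $\sigma$: as the model stipulates, the player's decisions are functions only of the relative ranks of $\tau_1,\ldots,\tau_t$ seen so far; under a uniformly random $\tau$ this relative-rank process has the same distribution for every $\sigma$, and, conditioned on it, the event that the returned candidate is the $b$-th best in $\sigma$ is determined. Hence each $p_b$ depends on $M$ and $b$ only, and $\sum_{b=1}^n p_b=1$ since $M$ always returns some candidate.

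Next I would play $(\ep,\delta)$-DP against many one-step neighbours of $\sigma$ simultaneously. Because $l=l(n)\to\infty$, for all large $n$ and every $b\in\{1,\ldots,l+1\}$ the permutation $\rho^{(b)}$ obtained from $\sigma$ by transposing the candidates in positions $1$ and $b$ has $d_l(\sigma,\rho^{(b)})\leqslant 1$, since those positions are at most $l$ apart. Applying Definition~\ref{df:priv_stop} with $S=\{\sigma_1\}$, and using that $\sigma_1$ occupies position $b$ in $\rho^{(b)}$,
\[
p_1=\P[CM(\sigma,\tau)=\sigma_1]\ \leqslant\ e^{\ep}\,\P[CM(\rho^{(b)},\tau)=\rho^{(b)}_{b}]+\delta\ =\ e^{\ep}p_b+\delta ,
\]
the last equality being the $\sigma$-independence just established. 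Summing over $b=1,\ldots,l+1$ and using $\sum_{b=1}^{l+1}p_b\leqslant 1$ yields $(l+1)p_1\leqslant e^{\ep}+(l+1)\delta$, i.e.
\[
p_1\ \leqslant\ \frac{e^{\ep}}{\,l+1\,}+\delta .
\]

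The dichotomy now drops out: since $l\to\infty$, if $\ep=\ep(n)$ does not tend to infinity then $e^{\ep}/(l+1)\to 0$, so $p_1\leqslant o(1)+\delta$ and hence $p_1\to 0$ in the usual regime $\delta=\delta(n)\to 0$; equivalently, a success probability bounded away from $0$ (with vanishing $\delta$) forces $e^{\ep}\geqslant(l+1)(p_1-\delta)\to\infty$. I expect the only genuine obstacle to be the first step: making rigorous that a general stopping time reads off only the relative-rank process, so that the $l+1$ quantities $\P[CM(\rho^{(b)},\tau)=\sigma_1]$ collapse to entries of the single summable sequence $(p_b)_{b=1}^n$. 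After that the argument is the elementary averaging above (and the hypothesis $p_i>0$ for every $i$ is not actually used by it).
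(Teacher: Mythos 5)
Your proposal is correct and follows essentially the same route as the paper: it uses the same neighbours $\rho^{(b)}=(\sigma_1~\sigma_b)\circ\sigma$ for $b\leqslant l+1$, the same set $S=\{\sigma_1\}$, the same rank-invariance identity $\P[CM(\rho^{(b)},\tau)=\sigma_1]=p_b$ (which the paper also uses implicitly), and the same contradiction with $\sum_b p_b=1$ as $l\to\infty$, with both arguments needing the side condition that $p_1-\delta$ stays bounded away from $0$. Your summation step, giving the explicit bound $p_1\leqslant e^{\ep}/(l+1)+\delta$, is merely a cleaner quantitative packaging of the paper's argument (and indeed makes the positivity hypothesis $\P[CM(\sigma,\tau)=\sigma_i]>0$ unnecessary).
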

\begin{remark}
Here we assume that $\delta$ is significantly smaller than $\P[CM(\sigma,\tau) = C_{\sigma_1}]$, i.e. that the difference between $\P[CM(\sigma,\tau) = C_{\sigma_1}]$ and $\delta$ tends to some positive constant with $n \to \infty$. Note that in practice one demands $\delta$ to be a very small constant (or even a value tending to $0$ with $n$). Therefore the assumption is reasonable.
\end{remark}
\begin{proof}
Assume by contradiction that $l = l(n)\xrightarrow{n \to \infty} \infty$ and the stopping algorithm $M$ is such that $\P[CM(\sigma,\tau) = C_{\sigma_1}]$ is a positive constant and $M$ is $(\ep,\delta)$-DP w.r.t. $d_l$ with a constant $\ep$. Consider the following qualification orderings, all being at a distance~$1$ from $\sigma$ w.r.t. $d_l$:
\[
\begin{split}
\rho^{(2)} & = (\sigma_1~\sigma_{2}) \circ \sigma = (\sigma_{2}, \sigma_1, \sigma_3, \ldots, \sigma_n),\\
\rho^{(3)} & = (\sigma_1~\sigma_{3}) \circ \sigma = (\sigma_{3}, \sigma_2, \sigma_1, \sigma_4, \ldots, \sigma_n),\\
& \ldots \\
\rho^{(l+1)} & = (\sigma_1~\sigma_{l+1}) \circ \sigma = (\sigma_{l+1}, \sigma_2, \ldots, \sigma_{l}, \sigma_{1}, \sigma_{l+2}, \ldots, \sigma_n).
\end{split}
\]
Let $\P[CM(\sigma,\tau) = C_{\sigma_j}] = q_{j,n}$. Note that for $j=2,3,\ldots,l+1$
\[
\P[CM(\rho^{(j)},\tau) = C_{\sigma_1}] = q_{j,n}.
\]
Since $M$ is $(\ep,\delta)$-DP w.r.t. $d_l$ the following system of $l$ inequalities is satisfied
\[
\ep \geq \ln{\frac{q_{1,n}-\delta}{q_{2,n}}}, \quad \ep \geq \ln{\frac{q_{1,n}-\delta}{q_{3,n}}}, \ldots, \quad \ep \geq \ln{\frac{q_{1,n}-\delta}{q_{l+1,n}}}
\]
(consider $S = \{C_{\sigma_1}\}$ in Definition \ref{df:priv_stop}). We infer that since $q_{1,n}$ is a constant and $\ep$ is a constant, the probabilities $q_{j,n}$ for $j=2,3,\ldots,l+1$ are also constant. Now, since $l = l(n)\xrightarrow{n \to \infty} \infty$ we get $q_{1,n} + q_{2,n} + \ldots + q_{l+1,n} \xrightarrow{n \to \infty} \infty$ which contradicts the fact that $\sum_{j=1}^{n} q_{j,n} = 1$.
\qed
\end{proof}

Therefore throughout the rest of the paper we always assume that $l$ is a constant.

\begin{theorem} \label{thm:gen_eps}
Consider a metric space $(\S_n,d_l)$. Let $\varepsilon \geq 0$, $\delta \in [0,1]$, and let $M$ be a stopping algorithm which is $(\ep,\delta)$-DP w.r.t. $d_l$. For a given $\sigma \in \S_n$ let 
\[
\P[CM(\sigma,\tau) = C_{\sigma_i}] = q_{i,n} \in (0,1), \quad i=1,2,\ldots,n.
\]
If there exists at least one pair $i,j \in [n]$ such that $i \neq j$, $|i-j| \leqslant l$, $q_{i,n} \geqslant q_{j,n}$ and $\delta < q_{i,n}-q_{j,n}$ then
\[
\ep \geqslant \max_{\substack{1 \leqslant i,j \leqslant n \\ |i-j| \leqslant l}} \ln\left\{\frac{q_{i,n}-\delta}{q_{j,n}}\right\}.
\]
Otherwise $\ep \geq 0$. 
The inequalities are tight.
\end{theorem}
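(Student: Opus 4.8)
\emph{Proof plan.} My plan is to exploit a relabelling symmetry of stopping times, in the same spirit as the proof of Theorem~\ref{thm:l_inf} but pushed slightly further. First I would fix the reference ordering $\sigma$ and a pair $i,j\in[n]$ with $|i-j|\le l$, let $\phi=(\sigma_i~\sigma_j)$ be the transposition of the two candidates $\sigma_i,\sigma_j\in\C$, and put $\rho=\phi\circ\sigma$, i.e. $\rho$ is $\sigma$ with the candidates in positions $i$ and $j$ interchanged. These two candidates sit $|i-j|\le l$ apart in $\sigma$, so $d_l(\sigma,\rho)=1$, and hence the inequality of Definition~\ref{df:priv_stop} is available for the pair $(\sigma,\rho)$.

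Next I would prove the key identity $\P[CM(\rho,\tau)=\sigma_i]=q_{j,n}$. Extending $\phi$ to time orderings coordinatewise, $\phi(\tau)=(\phi(\tau_1),\ldots,\phi(\tau_n))$, gives a measure-preserving bijection of $\T_n$. Since $\rho=\phi\circ\sigma$, the $\sigma$-rank of $\tau_s$ equals the $\rho$-rank of $\phi(\tau_s)$ for every $s$, so the sequence of relative ranks seen while $M$ runs on $(\rho,\phi(\tau))$ coincides with the one seen on $(\sigma,\tau)$. As a stopping time may base its decisions only on those relative ranks, $M(\rho,\phi(\tau))$ has the same distribution as $M(\sigma,\tau)$, so $CM(\rho,\phi(\tau))$ is distributed as $\phi\!\left(CM(\sigma,\tau)\right)$; averaging over the uniform $\tau$ (equivalently, over the uniform $\phi(\tau)$) yields $\P[CM(\rho,\tau)=\sigma_i]=\P[CM(\sigma,\tau)=\phi^{-1}(\sigma_i)]=q_{j,n}$. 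The same relabelling also shows that $q_{i,n}$ is independent of which reference ordering is used.

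Then I would instantiate Definition~\ref{df:priv_stop} on the pair $(\sigma,\rho)$ with $S=\{\sigma_i\}$, obtaining $q_{i,n}\le e^{\ep}q_{j,n}+\delta$, i.e. $e^{\ep}\ge (q_{i,n}-\delta)/q_{j,n}$, for \emph{every} pair with $|i-j|\le l$. If some pair additionally satisfies $q_{i,n}\ge q_{j,n}$ and $\delta<q_{i,n}-q_{j,n}$, the right-hand side exceeds $1$, so $\ep\ge\ln\{(q_{i,n}-\delta)/q_{j,n}\}>0$, and taking the maximum over all such pairs gives the claimed bound; if no such pair exists, every right-hand side is at most $1$ and only the vacuous $\ep\ge0$ remains.

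For the tightness claim I would exhibit a stopping time with the prescribed selection profile for which the bound is met, and verify it is $(\ep^{*},\delta)$-DP with $\ep^{*}$ the stated maximum. The relabelling argument again carries most of the load: every pair at $d_l$-distance $1$ is, up to relabelling of candidates, a transposition of two positions at most $l$ apart, so for an arbitrary output set $S$ the DP inequality reduces to $\sum_{\sigma_i\in S}q_{i,n}\le e^{\ep}\sum_{\sigma_i\in\phi(S)}q_{i,n}+\delta$, and a one-line computation shows the hardest $S$ is a singleton separating the two swapped candidates --- precisely the family of inequalities derived above. I expect the only genuine obstacle to be \emph{realizability}: one must produce an actual stopping time whose selection probabilities equal (or approximate arbitrarily closely) the given $q_{i,n}$, for instance by mixing the optimal algorithm with a rule stopping at a uniformly random time; once such a mechanism is in hand, checking the DP bound is routine.
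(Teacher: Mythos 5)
Your proposal is correct and follows essentially the same route as the paper: a neighbouring ordering obtained by transposing $\sigma_i$ and $\sigma_j$, the rank-symmetry identity $\P[CM(\rho,\tau)=\sigma_i]=q_{j,n}$ (which you in fact justify more explicitly than the paper does), the singleton set $S=\{\sigma_i\}$ for the lower bound, and the monotonicity-in-the-common-mass argument showing a singleton is the worst-case $S$, which is exactly how the paper establishes tightness. The only superfluous element is your realizability worry: tightness here is relative to the given algorithm $M$ with profile $q_{i,n}$, so no new stopping time needs to be constructed --- your worst-case analysis over $S$ already shows that this very $M$ is $(\ep,\delta)$-DP with $\ep$ equal to the stated maximum.
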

\begin{remark}
In this paper most of the time we find ourselves in the first situation. Usually $\delta$ will be already significantly smaller than $|q_{1,n}-q_{2,n}|$. Recall again that in practice one demands $\delta$ to be a very small constant (or even a value tending to $0$ with $n$).
\end{remark}
\begin{proof}
Since $M$ is $(\ep,\delta)$-DP w.r.t. $d_l$, the following inequality has to be satisfied for all $S \subseteq \C$ and for all $\rho \in \S_n$ such that $d_l(\sigma,\rho) = 1$ 
\begin{equation} \label{ineq:DP_basic}
\P[CM(\sigma, \tau) \in S] \leqslant e^{\ep} \P[CM(\rho, \tau) \in S] + \delta.
\end{equation}
Equivalently, we will work with the inequality
\begin{equation} \label{ineq:DP_eps}
e^{\ep} \geqslant \frac{\P[CM(\sigma,\tau) \in S] - \delta}{\P[CM(\rho,\tau) \in S]}
\end{equation}
assuming that $\P[CM(\rho,\tau) \in S] \neq 0$. (When $\P[CM(\rho,\tau) \in S] = 0$ the inequality~(\ref{ineq:DP_basic}) holds with any $\ep \geqslant 0$.) Thus let us investigate what is the maximal value that the right-hand side of the inequality (\ref{ineq:DP_eps}) may attain. Since $d_l(\sigma,\rho) = 1$, let us express $\rho$ as
\[
\rho = (\sigma_i~\sigma_j) \circ \sigma = (\sigma_1, \ldots, \sigma_{i-1}, \sigma_{j}, \sigma_{i+1}, \ldots, \sigma_i, \ldots, \sigma_n),\]
where $i,j \in [n]$, $i \neq j$ and $|i-j| \leqslant l$. For $k \in [n] \setminus \{i,j\}$ we have
\[
\P[CM(\rho,\tau)=C_{\sigma_k}] = \P[CM(\sigma,\tau)=C_{\sigma_k}] = q_{k,n}.
\]
In the remaining cases $\P[CM(\rho,\tau)=C_{\sigma_i}] = q_{j,n}$ and $\P[CM(\rho,\tau)=C_{\sigma_{j}}] = q_{i,n}$. Note that whenever $S$ neither contains $C_{\sigma_i}$ nor $C_{\sigma_j}$, the probabilities $\P[CM(\sigma,\tau) \in S]$ and $\P[CM(\rho,\tau) \in S]$ are equal and the above inequality holds with any $\ep \geqslant 0$. The situation is analogous whenever $S$ includes both, $C_{\sigma_i}$ and $C_{\sigma_j}$. Thus let us consider $S$ such that $C_{\sigma_i} \in S$ and $C_{\sigma_{j}} \notin S$ (the symmetric case with $C_{\sigma_j} \in S$ and $C_{\sigma_i} \notin S$ is analogous). We can write
\[
\P[CM(\sigma,\tau) \in S] = q_{i,n} + q \quad \textnormal{and} \quad \P[C M(\rho,\tau) \in S] = q_{j,n} + q,
\]
where $q = \P[CM(\sigma,\tau) \in S\setminus\{C_{\sigma_i}\}] = \P[CM(\rho,\tau) \in S\setminus\{C_{\sigma_i}\}]$. We have
\[
	\frac{\P[CM(\sigma,\tau) \in S] - \delta}{\P[CM(\rho,\tau) \in S]} = \frac{q_{i,n}+q-\delta}{q_{j,n}+q} =: f(q).
\]
Note that $f'(q) = \frac{q_{j,n}-q_{i,n}+\delta}{(q+q_{j,n})^2}$. Thus whenever $q_{i,n} \geqslant q_{j,n}$ and $\delta < q_{i,n}-q_{j,n}$ the function $f$ is decreasing. In the remaining cases $f$ is weakly increasing. Therefore if $q_{i,n} \geqslant q_{j,n}$ and $\delta < q_{i,n}-q_{j,n}$ we get
\[
\frac{\P[CM(\sigma,\tau) \in S] - \delta}{\P[CM(\rho,\tau) \in S]} \leqslant \frac{q_{i,n}-\delta}{q_{j,n}}
\]
which means that we maximize the expression setting $q=0$ (i.e., setting $S=\{C_{\sigma_i}\}$). The right-hand side of the above inequality is greater than or equal to 1. In the remaining cases (when $q_{i,n} \geqslant q_{j,n}$ and $\delta \geqslant q_{i,n}-q_{j,n}$ or when $q_{j,n} > q_{i,n}$) we get
\[
\frac{\P[CM(\sigma,\tau) \in S] - \delta}{\P[CM(\rho,\tau) \in S]} \leqslant \frac{1-q_{j,n}-\delta}{1-q_{i,n}}
\]
which means that we maximize the expression setting $q=1-q_{i,n}-q_{j,n}$ (i.e., setting $S = \C \setminus\{C_{\sigma_j}\}$). Note that the right-hand side of the above inequality is smaller than or equal to~1. Thus in this case the inequality (\ref{ineq:DP_eps}) holds for $\ep \geq 0$.

We conclude that whenever there exists at least one pair $i,j \in [n]$ such that $i \neq j$, $|i-j| \leqslant l$, $q_{i,n} \geqslant q_{j,n}$ and $\delta < q_{i,n}-q_{j,n}$ then
\[
\ep \geqslant \max_{\substack{1 \leqslant i,j \leqslant n \\ |i-j| \leqslant l}} \ln\left\{\frac{q_{i,n}-\delta}{q_{j,n}}\right\}.
\]
Otherwise $\ep \geq 0$. Note that in the proof in both cases we have indicated $S$ realizing the maximum. Therefore the bounds are tight.
\qed
\end{proof}

 \section{Hiding preferences}\label{sect:results}

Each optimal stopping algorithm is $(\ep,\delta)$-differentially private at some level, i.e., for some values of $\ep$ and $\delta$. These values will be often too high to meet the user's expectations. What the user can do is to resign from the optimality of the algorithm (however try to keep the accuracy of the algorithm at some acceptable level) gaining a higher level of privacy. It can be achieved by a careful modification of the distribution of the outcome of the algorithm. Analyzing the definition of differential privacy one can deduce that the closer this distribution to the uniform one is, the smaller the values of $\ep$ and $\delta$ in the definition of differential privacy may become. E.g., below in Fact \ref{fact:uniform_private} we explain that the algorithm with uniform outcome is $(0,0)$-differentially private regardless of the metric. Thus what the user should do in order to achieve the desired privacy level is to modify the algorithm $M^{opt}$ such that the distribution of its outcome comes in some sense closer to the uniform distribution. 

In this section we analyze a natural mechanism transforming an arbitrary optimal stopping time $M^{opt}$ into the algorithm meeting stricter privacy requirements, yet preserving some level of accuracy. It is equipped with a parameter $p \in [0,1]$ controlling the smooth transition between optimality and $(0,0)$-DP.


\begin{df} \label{df:pmix}
	Let $M^{''}$ and $M'$ be two online stopping algorithms for the same stopping problem. A $p$-mix on $M^{''}$ and $M'$ is defined as follows. We toss a coin that comes down heads with probability $p$. If it comes down heads, we play according to $M^{''}$. If it comes down tails, we play according to $M'$.
\end{df}

\begin{df} \label{df:blind}
We call the algorithm $\tilde{M}$ a blind choice if for a fixed $\sigma \in \S_n$ and for any $\tau \in \S_n$ it always stops at $\tau_1$, i.e., $C\tilde{M}(\sigma,\tau)=C_{\tau_1}$, equivalently $\tilde{M}(\sigma, \tau) = 1$.
\end{df}

\begin{fact} \label{fact:uniform_private}
A blind choice $\tilde{M}$ is $(0,0)$-differentially private regardless of the metric we use.
\end{fact}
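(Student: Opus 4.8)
The plan is to exploit the single structural feature of a blind choice: its output does not depend on the first argument at all. First I would record that, by definition, for every fixed $\sigma \in \S_n$ and every $\tau \in \T_n$ we have $C\tilde{M}(\sigma,\tau)=\tau_1$. In particular, the random variable $C\tilde{M}(\sigma,\tau)$ — whose randomness comes only from the uniform choice of $\tau \in \T_n$ — has a distribution that is completely independent of $\sigma$. (As a side remark one may note that this common distribution is the uniform distribution on $\C$, since $\tau_1$ is uniform on $\C$ when $\tau$ is uniform on $\T_n$, but this stronger fact is not even needed here.)

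Next I would verify the condition of Definition~\ref{df:priv_stop} directly with $\ep=0$ and $\delta=0$. Fix any metric $d$ on $\S_n$, any $S \subseteq \mathrm{Range}(\tilde{M})$, and any $\sigma,\rho \in \S_n$ with $d(\sigma,\rho)\leqslant 1$. By the observation above,
\[
\P[C\tilde{M}(\sigma,\tau)\in S] \;=\; \P[\tau_1 \in S] \;=\; \P[C\tilde{M}(\rho,\tau)\in S],
\]
so in particular
\[
\P[C\tilde{M}(\sigma,\tau)\in S] \;\leqslant\; e^{0}\,\P[C\tilde{M}(\rho,\tau)\in S] + 0,
\]
which is exactly the $(0,0)$-DP inequality. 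Since the choice of metric $d$ was arbitrary and entered the argument only through the (now vacuous) hypothesis $d(\sigma,\rho)\leqslant 1$, the claim follows for every metric.

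There is essentially no obstacle in this proof: the whole point is that the blind choice discards the sensitive input $\sigma$, so indistinguishability of outputs is automatic and the only thing one must be careful about is to phrase it uniformly over all metrics and all pairs $\sigma,\rho$, not merely over pairs at distance one in some particular $d_l$. If anything deserves a sentence of emphasis, it is precisely that the metric plays no role — which is why this fact serves as the $p=0$ endpoint of the trade-off analyzed in the sequel.
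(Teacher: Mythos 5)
Your proposal is correct and follows essentially the same route as the paper's proof: both rest on the observation that the output distribution of $\tilde{M}$ (namely that of $\tau_1$, uniform on $\C$) is independent of $\sigma$, so the $(0,0)$-DP inequality holds with equality for every $S$, every pair $\sigma,\rho$, and every metric. The only cosmetic difference is that the paper spells out the value $|S|/n$ explicitly while you note it is not needed.
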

\begin{proof}
Note that for a fixed $\sigma$ and any $C \in \C$ we have $\P[C\tilde{M}(\sigma,\tau)=C] = 1/n$. Indeed, the candidate $C$ will be selected by $\tilde{M}$ if and only if it is the first candidate in the time ordering $\tau$. Time ordering $\tau$ is chosen uniformly at random from $\T_n$, thus the probability that it starts with $C$ is $1/n$. In particular, for $S \subseteq \C$ we get $\P[C\tilde{M}(\sigma,\tau) \in S] = |S|/n$ (this probability is independent of $\sigma$). Thus for fixed $\sigma ,\rho \in \S_n$ and for any $S \subseteq \C$ we always get $\P[C\tilde{M}(\sigma,\tau) \in S] = \P[C\tilde{M}(\rho,\tau) \in S] = |S|/n$ thus
\[
\P[C\tilde{M}(\sigma,\tau) \in S] \leqslant  e^0 \P[C\tilde{M}(\rho,\tau) \in S] + 0.
\]
\qed
\end{proof}

Note that if we consider a $p$-mix on $M^{opt}$ and $\tilde{M}$ (i.e., a $p$-mix on an optimal algorithm and a blind choice) then we get a controllable by $p$ algorithm being a trade-off between two extremes. One of them is optimality (the case when $p=1$) and the other one is $(0,0)$-differential privacy (the case when $p=0$). Setting higher $p$ means relaxing the requirements for $(\ep,\delta)$-differential privacy but at the same time obtaining a larger probability of choosing the proper candidate. Setting smaller $p$ we resign from the high accuracy of the algorithm but we gain a higher level of privacy.

Some general results on the accuracy and the privacy of a $p$-mix algorithm are given in Section \ref{sec:general_pmix}. The detailed analysis of a classical case, i.e., of a $p$-mix on the optimal algorithm for the secretary problem and a blind choice, is given in Section \ref{sec:secretary_pmix}.

\subsection{Accuracy and differential privacy of a $p$-mix $M$} \label{sec:general_pmix}

In this section we formulate some general results on the accuracy and the privacy of a $p$-mix algorithm. We start with a simple fact about the minimum level to which the accuracy of this algorithm may drop.


\begin{fact} \label{fact:mix_efficiency}
Fix $\sigma \in \S_n$ and let $\tau$ be chosen uniformly at random from $\S_n$. Let $D \subseteq \C$ be the set of the desired candidates from Definition \ref{df:opt_stop}. Let $M^{opt}$ be the optimal stopping time and $M'$ any other stopping algorithm for this problem. Let $p \in [0,1]$ and $M$ be the $p$-mix on $M^{opt}$ and $M'$. Then
\[
\P[CM(\sigma,\tau) \in D] \geqslant p \cdot \P[CM^{opt}(\sigma,\tau) \in D].
\]
\end{fact}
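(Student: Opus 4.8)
The plan is to use the law of total probability, conditioning on the outcome of the coin toss that defines the $p$-mix. By Definition~\ref{df:pmix}, the algorithm $M$ runs $M^{opt}$ when the coin comes up heads (probability $p$) and $M'$ when it comes up tails (probability $1-p$). Crucially, the coin toss is independent of the random time ordering $\tau$, so for the event $\{CM(\sigma,\tau) \in D\}$ we may write
\[
\P[CM(\sigma,\tau) \in D] = p \cdot \P[CM^{opt}(\sigma,\tau) \in D] + (1-p) \cdot \P[CM'(\sigma,\tau) \in D].
\]

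From here the bound is immediate: since $p \in [0,1]$ we have $1-p \geqslant 0$, and since $\P[CM'(\sigma,\tau) \in D] \geqslant 0$ as a probability, the second term on the right-hand side is nonnegative. Dropping it yields
\[
\P[CM(\sigma,\tau) \in D] \geqslant p \cdot \P[CM^{opt}(\sigma,\tau) \in D],
\]
which is exactly the claim. No special properties of $M^{opt}$ (optimality) or of $D$ are needed for this particular inequality — optimality of $M^{opt}$ only matters for interpreting the bound as "a $p$ fraction of the best achievable accuracy".

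The only point requiring a word of care is the independence used in the total-probability decomposition: one should note that the auxiliary coin in the $p$-mix is, by construction, tossed independently of everything else (in particular of $\tau$), so that conditioning on heads leaves the distribution of $\tau$ uniform on $\T_n$ and the conditional behaviour of $M$ coincides with that of $M^{opt}$, and similarly for tails and $M'$. There is no real obstacle here; the statement is essentially a one-line consequence of the definition, and I would present it as such, with the total-probability identity displayed and the nonnegativity of the discarded term noted explicitly.
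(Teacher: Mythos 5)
Your proposal is correct and matches the paper's own argument: the paper likewise writes $\P[CM(\sigma,\tau) \in D] = p \cdot \P[CM^{opt}(\sigma,\tau) \in D] + (1-p) \cdot \P[CM'(\sigma,\tau) \in D]$ directly from the definition of a $p$-mix and drops the nonnegative second term. Your extra remark about the coin toss being independent of $\tau$ is a fine (if optional) clarification; nothing further is needed.
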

\begin{proof}
Obviously, by the definition of a $p$-mix algorithm we get
\begin{align*}
\P[CM(\sigma,\tau) \in D] & = p \cdot \P[CM^{opt}(\sigma,\tau) \in D] + (1-p) \cdot \P[CM'(\sigma,\tau) \in D] \\
& \geqslant p \cdot \P[CM^{opt}(\sigma,\tau) \in D].
\end{align*}
\qed
\end{proof}


Assume that the optimal algorithm for some stopping problem is $(\ep,\delta)$ - differentially private. The following theorem explains how differential privacy improves (i.e., how parameters $\ep$ and $\delta$ drop) if we mix this optimal algorithm with a strategy whose outcome distribution is uniform.

\begin{theorem} \label{thm:optimal}
	Let $\varepsilon \geq 0$ and $\delta \in [0,1]$. Fix $\sigma \in \S_n$ and let $\tau$ be chosen uniformly at random from $\S_n$. Let also $M^{opt}$ be some optimal stopping algorithm. Consider a metric space $(\S_n, d_l)$. Assume that $M^{opt}$ is $(\ep,\delta)$-differentially private w.r.t. metric $d_l$. Let $M'$ be the algorithm whose outcome distribution is uniform, i.e., for all $k~\in~[n]$ $
	\P[CM'(\sigma,\tau) = C_{\sigma_{k}}] = 1/n$ (e.g., it may be a blind choice $\tilde{M}$).
	Then the $p$-mix $M$ on $M^{opt}$ and $M'$ is $\left(\ln \left(e^{\varepsilon}-\frac{(1-p)(e^{\varepsilon}-1)}{n}\right), p \cdot \delta\right)$-differentially private w.r.t. metric $d_l$.
\end{theorem}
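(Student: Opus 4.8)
The plan is to apply Theorem~\ref{thm:gen_eps} in reverse: instead of lower-bounding $\varepsilon$ from known success probabilities, I will directly verify the differential privacy inequality of Definition~\ref{df:priv_stop} for the $p$-mix $M$, using the inequality that $M^{opt}$ already satisfies. First I would set up notation: fix $\sigma,\rho \in \S_n$ with $d_l(\sigma,\rho) \leqslant 1$ and fix an arbitrary $S \subseteq \C$. By the definition of a $p$-mix (Definition~\ref{df:pmix}) and the law of total probability over the coin toss,
\[
\P[CM(\sigma,\tau)\in S] = p\cdot\P[CM^{opt}(\sigma,\tau)\in S] + (1-p)\cdot\P[CM'(\sigma,\tau)\in S],
\]
and since the outcome distribution of $M'$ is uniform and independent of the qualification ordering, $\P[CM'(\sigma,\tau)\in S] = \P[CM'(\rho,\tau)\in S] = |S|/n$. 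Denote $a = \P[CM^{opt}(\sigma,\tau)\in S]$ and $b = \P[CM^{opt}(\rho,\tau)\in S]$, so that what must be shown is
\[
pa + (1-p)\tfrac{|S|}{n} \;\leqslant\; e^{\varepsilon'}\Bigl(pb + (1-p)\tfrac{|S|}{n}\Bigr) + p\delta,
\]
where $e^{\varepsilon'} = e^{\varepsilon} - \frac{(1-p)(e^{\varepsilon}-1)}{n}$.

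The second step is to feed in the hypothesis that $M^{opt}$ is $(\varepsilon,\delta)$-DP, i.e. $a \leqslant e^{\varepsilon} b + \delta$. Substituting this bound into the left-hand side gives
\[
pa + (1-p)\tfrac{|S|}{n} \;\leqslant\; p e^{\varepsilon} b + p\delta + (1-p)\tfrac{|S|}{n},
\]
so it suffices to show $p e^{\varepsilon} b + (1-p)\tfrac{|S|}{n} \leqslant e^{\varepsilon'}\bigl(pb + (1-p)\tfrac{|S|}{n}\bigr)$. This is where the specific form of $\varepsilon'$ does the work: rearranging, the claim becomes $p b\,(e^{\varepsilon} - e^{\varepsilon'}) \leqslant (1-p)\tfrac{|S|}{n}(e^{\varepsilon'} - 1)$. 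Now $e^{\varepsilon} - e^{\varepsilon'} = \frac{(1-p)(e^{\varepsilon}-1)}{n}$ and $e^{\varepsilon'} - 1 = e^{\varepsilon}-1 - \frac{(1-p)(e^{\varepsilon}-1)}{n} = (e^{\varepsilon}-1)\bigl(1 - \frac{1-p}{n}\bigr) = (e^{\varepsilon}-1)\cdot\frac{n-1+p}{n}$, so after cancelling the common positive factor $\frac{(1-p)(e^{\varepsilon}-1)}{n}$ (valid when $p<1$ and $\varepsilon>0$; the edge cases $p=1$ and $\varepsilon=0$ are checked trivially and separately) the inequality reduces to $p b \leqslant |S|\cdot\frac{n-1+p}{n}$, equivalently $b \leqslant \frac{|S|}{p}\cdot\frac{n-1+p}{n}$. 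Since $b \leqslant 1$ always and $|S| \geqslant 1$ whenever $S$ is non-empty (the empty-$S$ case is vacuous), it is enough to observe $\frac{n-1+p}{pn} \geqslant 1$, i.e. $n-1+p \geqslant pn$, i.e. $(n-1)(1-p)\geqslant 0$, which holds. This closes the argument.

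The main obstacle I anticipate is purely bookkeeping: making sure the reduction to $b \leqslant \frac{|S|}{p}\cdot\frac{n-1+p}{n}$ is carried out without sign errors, and handling the boundary cases $p=0$, $p=1$, $\varepsilon=0$, and $S=\varnothing$ cleanly rather than dividing by zero. It is worth noting explicitly that $e^{\varepsilon'} \geqslant 1$ (equivalently $\varepsilon' \geqslant 0$), which follows from the computation $e^{\varepsilon'}-1 = (e^{\varepsilon}-1)\frac{n-1+p}{n} \geqslant 0$, so that $\varepsilon'$ is a legitimate privacy parameter; and that the $\delta$ parameter $p\delta$ drops out immediately from the substitution step with no further work. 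One could alternatively phrase the whole proof as: the map $x \mapsto px + (1-p)/n$ applied coordinatewise to a distribution is a "contraction toward uniform", and contracting toward uniform can only improve the multiplicative and additive privacy parameters — but the direct computation above is the most transparent route and makes the exact constants fall out.
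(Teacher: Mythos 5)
Your proposal is correct and follows essentially the same route as the paper's proof: decompose the $p$-mix via the coin toss, apply the $(\varepsilon,\delta)$-DP inequality of $M^{opt}$, use that $M'$'s uniform output is independent of the preference ordering, and close with the crude bound $\P[CM(\rho,\tau)\in S]\leqslant 1\leqslant |S|$ for non-empty $S$. The paper carries out the same reduction inline in one chain of inequalities rather than isolating the cancellation $pb\leqslant |S|\tfrac{n-1+p}{n}$, but the content is identical.
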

\begin{proof}
	The algorithm $M^{opt}$ is $(\ep,\delta)$-differentially private w.r.t. metric $d_l$ thus for all $\rho \in \S_n$ such that $d_l(\sigma,\rho) \leqslant 1$ and for all $S \subseteq \C$ we have
	\[
	\P[CM^{opt}(\sigma,\tau) \in S] \leqslant e^{\ep} \P[CM^{opt}(\rho,\tau) \in S] + \delta.
	\]
	Moreover, by the definition of a $p$-mix for every $\pi \in \S_n$ we get
	\[
	\P[CM(\pi,\tau) \in S] = p \cdot \P[CM^{opt}(\pi,\tau) \in S] + (1-p) \cdot \P[CM'(\pi,\tau) \in S].
	\]
	Hence for all $\rho \in \S_n$ such that $d_l(\sigma,\rho) \leqslant 1$ and for all $S \subseteq \C$ such that $S \neq \emptyset$ (if $S = \emptyset$ then the differential privacy inequality for $M$ holds for any $\varepsilon \geq 0$ and any $\delta \geq 0$)
	\[ 
		\begin{split}
			\P[CM(\sigma,\tau) & \in S] = p \cdot \P[CM^{opt}(\sigma,\tau) \in S] + (1-p) \cdot \P[CM'(\sigma,\tau) \in S] \\
			& \leqslant p \cdot e^{\ep} \cdot \P[CM^{opt}(\rho,\tau) \in S] + p\cdot \delta + (1-p) \cdot \P[CM'(\sigma,\tau) \in S] \\
			& = e^{\ep} (\P[CM(\rho,\tau) \in S] - (1-p) \cdot \P[CM'(\rho,\tau) \in S]) \\
			& \quad + p\cdot \delta + (1-p) \cdot \P[CM'(\sigma,\tau) \in S] \\
			& = e^{\ep} \P[CM(\rho,\tau) \in S] + p\cdot \delta - (1-p) (e^{\ep}-1)\frac{|S|}{n} \\
			& \leqslant \left(e^{\ep} - \frac{(1-p)(e^{\ep}-1)}{n}\right) \P[CM(\rho,\tau) \in S] + p\cdot \delta. \quad \quad \quad \quad \quad \quad \qed
		\end{split} 
	\] 
\end{proof}

\subsection{Trade-off between optimality and differential privacy in the secretary problem} \label{sec:secretary_pmix}

This section is an analytical discussion about the optimal stopping algorithm for the classical secretary problem in the context of  differential privacy. Below we present a detailed analysis of the accuracy and differential privacy of a $p$-mix on $M^*$ and $\tilde{M}$, where $M^*$ is the optimal solution for the secretary problem, and $\tilde{M}$ is the blind choice. Recall that in the secretary problem one aims at choosing only the best out of all $n$ candidates (i.e., for a fixed $\sigma \in \S_n$ at selecting $\sigma_1$). Any other choice is interpreted as a loss. Let us start with a simple fact about the accuracy of the $p$-mix on $M^*$ and $\tilde{M}$.

\begin{fact} \label{fact:efficiency_pmix}
Fix $\sigma \in \S_n$. Let $M$ be a $p$-mix on $M^*$ and $\tilde{M}$, where $M^*$ is the optimal algorithm for the secretary problem, and $\tilde{M}$ is the blind choice. Then
\[
\P[CM(\sigma,\tau) = C_{\sigma_1}] = p \cdot \P[CM^*(\sigma,\tau) = C_{\sigma_1}] + \frac{1-p}{n} \sim \frac{p}{e}.
\]
\end{fact}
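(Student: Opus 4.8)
The plan is to unwind the definition of the $p$-mix and then invoke the two facts already at hand: the outcome distribution of a blind choice (Fact~\ref{fact:uniform_private}) and the asymptotic win probability of $M^*$ (Fact~\ref{fact:kth_cand}). First I would write, directly from Definition~\ref{df:pmix} and the law of total probability over the coin toss (which is independent of $\tau$): with probability $p$ the algorithm $M$ acts as $M^*$ and with probability $1-p$ it acts as $\tilde{M}$, so
\[
\P[CM(\sigma,\tau) = \sigma_1] = p\,\P[CM^*(\sigma,\tau) = \sigma_1] + (1-p)\,\P[C\tilde{M}(\sigma,\tau) = \sigma_1].
\]

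Next I would substitute the second term using Fact~\ref{fact:uniform_private} (or rather the computation inside its proof): for a fixed $\sigma$ the blind choice returns any prescribed candidate, in particular $\sigma_1$, with probability exactly $1/n$, because $\tilde{M}$ always stops at $\tau_1$ and $\tau$ is uniform on $\T_n$. Plugging $\P[C\tilde{M}(\sigma,\tau) = \sigma_1] = 1/n$ into the identity above yields the claimed exact equality
\[
\P[CM(\sigma,\tau) = \sigma_1] = p\,\P[CM^*(\sigma,\tau) = \sigma_1] + \frac{1-p}{n}.
\]

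For the asymptotic claim I would use Fact~\ref{fact:kth_cand}, which gives $\P[CM^*(\sigma,\tau) = \sigma_1] \xrightarrow{n\to\infty} 1/e$. Assuming $p>0$, multiplying this convergence by the constant $p$ shows $p\,\P[CM^*(\sigma,\tau) = \sigma_1] \sim p/e$, while the remaining term satisfies $\tfrac{1-p}{n} = o(1) = o(p/e)$, so the sum is asymptotically equivalent to $p/e$. There is no genuine obstacle here: the statement reduces entirely to the definition of the mix together with the two cited facts. The only point meriting a word of care is the implicit assumption $p>0$ in the $\sim$ notation, since for $p=0$ the left-hand side equals $1/n$ while $p/e=0$; this degenerate case is simply excluded from the asymptotic assertion.
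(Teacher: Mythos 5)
Your proposal is correct and follows essentially the same route as the paper: the paper's proof likewise derives the exact identity directly from the definition of the $p$-mix (with the blind choice contributing $1/n$) and obtains the asymptotics from Fact~\ref{fact:kth_cand}. Your remark about excluding $p=0$ from the asymptotic claim is a sensible extra precision, consistent with the paper's later restriction to $p\in(0,1]$.
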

\begin{proof}
The result follows straight from the definition of a $p$-mix and Fact \ref{fact:kth_cand}.
\qed
\end{proof}

Before we move on to analyzing differential privacy of the classical $p$-mix on $M^*$ and $\tilde{M}$, let us introduce some simplifications in notation.
Throughout this section $p \in [0,1]$ is always a constant and $M$ is always a $p$-mix on $M^*$ and $\tilde{M}$. (The case $p=0$ when $M$ is just a blind choice was already discussed thus we will often assume $p \in (0,1]$.) We also introduce a shorter notation for the probabilities that $M^*$ or $M$ select the $k\textsuperscript{th}$ best candidate, namely, for $\sigma \in \S_n$ and $\tau$ chosen uniformly at random from $\S_n$
\[
r_{k,n} = \P[CM^*(\sigma,\tau) = C_{\sigma_k}] \quad \textnormal{and} \quad q_{k,n} = \P[CM(\sigma,\tau) = C_{\sigma_k}].
\]
By this notation, following Definition \ref{df:pmix}, we can write
\begin{equation} \label{lemma:pmix_prob}
q_{k,n} = p \cdot r_{k,n} + \frac{1-p}{n}.
\end{equation}
Therefore, by Theorem \ref{thm:kconst_limit} (see the Appendix), we formulate the following corollary.
\begin{corollary} \label{cor:pmix_asymp}
Let $\sigma \in \S_n$ and $\tau$ be chosen uniformly at random from $\S_n$. Let $p \in [0,1]$ be a constant and let $M$ be a $p$-mix on $M^*$ and $\tilde{M}$. Then for 
$k = k(n)=o(n)$
\[
q_{k,n}  \sim \frac{p}{e} \cdot \sum_{s=k}^{\infty} \frac{1}{s} \left(1-\frac{1}{e}\right)^s, \quad \textit{in particular} \quad q_{1,n}  \sim \frac{p}{e}.
\]
\end{corollary}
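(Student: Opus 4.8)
The plan is to read off the statement from the probability decomposition~(\ref{lemma:pmix_prob}) together with the asymptotics of $M^*$ recorded in Theorem~\ref{thm:kconst_limit} (equivalently Fact~\ref{fact:kth_cand}). By~(\ref{lemma:pmix_prob}) we have $q_{k,n} = p\,r_{k,n} + \frac{1-p}{n}$, so the $p$-mix outcome distribution is the convex combination of the distribution of $CM^*$ and the uniform distribution coming from the blind choice; the uniform part puts mass exactly $\frac{1-p}{n}$ on $\sigma_k$, which vanishes as $n\to\infty$. Throughout one takes $p\in(0,1]$, since for $p=0$ the right-hand side $\frac{p}{e}\sum_{s\geqslant k}\frac1s(1-1/e)^s$ is identically $0$ and the relation ``$\sim$'' is vacuous ($p=0$ is the blind choice, already treated).

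Set $A_{k,n} := \frac1e\sum_{s=k}^{\infty}\frac1s\bigl(1-\tfrac1e\bigr)^s$, so that Theorem~\ref{thm:kconst_limit} says $r_{k,n}\sim A_{k,n}$. The only computation needed is
\[
\frac{q_{k,n}}{p\,A_{k,n}} \;=\; \frac{r_{k,n}}{A_{k,n}} \;+\; \frac{1-p}{n\,p\,A_{k,n}},
\]
whose first term tends to $1$ by Theorem~\ref{thm:kconst_limit}. For $k$ constant, $A_{k,n}$ equals a fixed positive constant $A_k$, so the second term is $O(1/n)\to0$, and hence $q_{k,n}\sim p\,A_k$. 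For the ``in particular'' clause, $k=1$, I would use $\sum_{s\geqslant1}x^s/s=-\ln(1-x)$ with $x=1-\frac1e$ to get $A_1=\frac1e\cdot(-\ln\frac1e)=\frac1e$, so that $q_{1,n}\sim p/e$.

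For $k=k(n)=o(n)$ the quantity $A_{k,n}$ may itself tend to $0$, and beyond the constant case the only thing to verify is that the blind-choice correction is of smaller order, i.e.\ $\frac{1-p}{n}=o\bigl(A_{k,n}\bigr)$ (equivalently $n\,A_{k,n}\to\infty$); granting this in the regime where we invoke Theorem~\ref{thm:kconst_limit}, the second term above again vanishes and $q_{k,n}\sim p\,A_{k,n}$. I expect this last step — controlling $\frac{1-p}{n}$ against $p\,r_{k,n}$ for growing $k$ — to be the only mildly delicate point; there is no genuine analytic difficulty, as everything reduces to adding a lower-order term to $p\,r_{k,n}$ and dividing.
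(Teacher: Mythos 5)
Your derivation is the same as the paper's: Corollary~\ref{cor:pmix_asymp} is stated there with no separate argument, as an immediate consequence of $q_{k,n}=p\,r_{k,n}+\frac{1-p}{n}$ (relation~(\ref{lemma:pmix_prob})) and Theorem~\ref{thm:kconst_limit}; your computation just makes the division explicit, and it is complete for constant $k$ (and trivially for $p=1$), which is the only regime in which the corollary is invoked later in the paper (there $k\leqslant l+1$ with $l$ constant).

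The condition you leave unverified for growing $k$, namely $\frac{1-p}{n}=o(A_{k,n})$, is not a removable technicality: since $A_{k,n}\leqslant\frac{1}{k}\left(1-\frac{1}{e}\right)^{k}$ decays geometrically in $k$, taking $p\in(0,1)$ and, say, $k(n)=\lceil\sqrt{n}\rceil=o(n)$ gives $A_{k,n}=o(1/n)$, hence $q_{k,n}\sim\frac{1-p}{n}$ and the claimed relation $q_{k,n}\sim p\,A_{k,n}$ fails. (In that regime even the cited relation $r_{k,n}\sim A_{k,n}$ cannot hold, because Theorem~\ref{thm:kth_cand} forces $r_{k,n}\geqslant\frac{t_n-1}{n(n-1)}\sim\frac{1}{en}$ for every $k$.) So the hedge you flag marks a genuine inaccuracy of the corollary as printed for rapidly growing $k(n)=o(n)$; your proof is valid exactly where the statement is, the precise repair being to restrict to $k$ with $nA_{k,n}\to\infty$ (in particular constant $k$), and nothing downstream in the paper is affected.
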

For transparency of notation let also
\[
a_k = \sum_{s=k}^{\infty}\frac{1}{s} \left(1-\frac{1}{e}\right)^s.
\]
Note that the above sum always converges, in particular $a_1=1$ and $a_2 = 1/e$. By Fact~\ref{fact:kth_cand} and Corollary \ref{cor:pmix_asymp} for 
$k = k(n)=o(n)$ we can simply write 
\[
r_{k,n} \sim \frac{1}{e} \cdot a_k \quad \textnormal{and} \quad q_{k,n} \sim \frac{p}{e} \cdot a_k.
\]

In Theorem \ref{thm:main_eps} and the two following corollaries we give constraints for $\ep, \delta$ and $p$ which guarantee that a $p$-mix $M$ is $(\ep,\delta)$-differentially private with respect to metric $d_l$. 
Recall that we always assume that $l$ is a constant (consult Theorem \ref{thm:l_inf}). The following two technical lemmas will be helpful when proving the main theorem.

\begin{lemma} \label{lemma:seq_dec}
Fix $\sigma \in \S_n$ and let $\tau$ be chosen uniformly at random from $\S_n$. 
The sequence $\{q_{k,n}\}_{k \in [n]}$ is non-increasing in $k$.
\end{lemma}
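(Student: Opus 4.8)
The plan is to reduce the statement to the corresponding monotonicity for the optimal secretary algorithm and then prove the latter by an explicit injection between time orderings. By~(\ref{lemma:pmix_prob}) we have $q_{k,n}=p\,r_{k,n}+\frac{1-p}{n}$ with $p\geqslant 0$, so it suffices to show that $\{r_{k,n}\}_{k\in[n]}$ is non-increasing; the monotonicity then passes to $\{q_{k,n}\}$ unchanged. First I would handle the degenerate case $t_n=1$ (by Fact~\ref{fact:threshold} this happens only for $n\leqslant 2$): there $M^*$ never rejects and always stops at $\tau_1$, so $r_{k,n}=1/n$ for all $k$ and there is nothing to prove. Hence assume $t_n\geqslant 2$.

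For the main case I would build an injection of $A_k:=\{\tau:CM^*(\sigma,\tau)=\sigma_{k+1}\}$ into $B_k:=\{\tau:CM^*(\sigma,\tau)=\sigma_k\}$; since $\tau$ is uniform this yields $r_{k+1,n}\leqslant r_{k,n}$. The map to use is the involution $\phi:\T_n\to\T_n$, $\phi(\tau)=(\sigma_k~\sigma_{k+1})\circ\tau$, which interchanges the candidates $\sigma_k$ and $\sigma_{k+1}$ wherever they occur in the time ordering; $\phi$ is a bijection of $\T_n$, so it suffices to check that $CM^*(\sigma,\tau)=\sigma_{k+1}$ implies $CM^*(\sigma,\phi(\tau))=\sigma_k$. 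Suppose $M^*$ stops at position $j$ on $\tau$ because conditions~(1)--(3) in the definition of $M^*$ are met at $j$, with $\tau_j=\sigma_{k+1}$. Then $\sigma_{k+1}$ is a record at time $j$, so every better candidate --- in particular $\sigma_k$ --- sits at a position $>j$ in $\tau$; hence $\tau$ and $\phi(\tau)$ agree on positions $1,\dots,j-1$, so $M^*$ takes the same (non-stopping) decisions on $\phi(\tau)$ up to time $j-1$. At time $j$ the element $\phi(\tau)_j=\sigma_k$ beats all of $\tau_1,\dots,\tau_{j-1}$ (it beats even $\sigma_{k+1}$, which already beat them), hence is a record; $j\geqslant t_n$; and positions $t_n,\dots,j-1$ carry the same candidates as in $\tau$, so still host no record. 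Thus $M^*(\sigma,\phi(\tau))=j$ and $CM^*(\sigma,\phi(\tau))=\sigma_k$.

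The remaining case --- which I expect to be the main obstacle --- is when $M^*$ reaches $\tau_n$ via the fallback rule, i.e. no record occurs after the threshold (equivalently, the global best $\sigma_1$ appears among $\tau_1,\dots,\tau_{t_n-1}$) and $\tau_n=\sigma_{k+1}$. If $k\geqslant 2$, then $\phi$ fixes the position of $\sigma_1$, so on $\phi(\tau)$ the fallback rule still fires and returns $\phi(\tau)_n=\sigma_k$. The delicate sub-case is $k=1$, where $\phi$ moves the optimum $\sigma_1$ from a position $\leqslant t_n-1$ to position $n$: here I would reason directly with prefix minima --- positions $t_n,\dots,n-1$ are untouched by $\phi$, and although their running best changes from $\sigma_1$ to $\sigma_2$ none of them becomes a record (the candidate sitting there is, in $\tau$, worse than $\sigma_2$), while position $n$ now holds the record $\sigma_1$, so $M^*$ stops at $n$ and selects $\sigma_1=\sigma_k$. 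In every case $\phi(A_k)\subseteq B_k$, so injectivity of $\phi$ gives $r_{k+1,n}\leqslant r_{k,n}$, and therefore $q_{k+1,n}\leqslant q_{k,n}$.

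As a check (and a computation-only alternative that avoids the case split) one may instead start from the closed form for $r_{k,n}$ in Fact~\ref{fact:kth_cand} and apply Pascal's identity $\binom{n-k}{i-1}=\binom{n-k-1}{i-1}+\binom{n-k-1}{i-2}$ termwise; the difference collapses to
\[
r_{k,n}-r_{k+1,n}=\frac{t_n-1}{n}\left(\frac{1}{(n-k)\binom{n-1}{k-1}}+\sum_{i=t_n}^{n-k}\frac{1}{i-1}\cdot\frac{\binom{n-k-1}{i-2}}{\binom{n-1}{i-1}}\right),
\]
which is visibly non-negative (this is where $t_n\geqslant 2$ is used). I would present the involution argument as the proof and leave the identity as a remark; the only genuinely fiddly point in either approach is the $k=1$ fallback situation, the rest being routine bookkeeping about left-to-right minima.
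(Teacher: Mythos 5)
Your proof is correct, but it takes a genuinely different route from the paper for the key step. You reduce, exactly as the paper does, to monotonicity of $\{r_{k,n}\}$ via $q_{k,n}=p\,r_{k,n}+\frac{1-p}{n}$; the paper then simply invokes Corollary~\ref{cor:rkn_dec}, which rests on the algebraic identity of Lemma~\ref{lemma:rkn_difference} giving the closed form $r_{k,n}-r_{k+1,n}=\frac{t_n-1}{n}\cdot\frac{1}{k}\cdot\frac{\binom{n-t_n+1}{k}}{\binom{n-1}{k}}\geqslant 0$, obtained by binomial manipulations of the formula in Theorem~\ref{thm:kth_cand}. You instead prove $r_{k+1,n}\leqslant r_{k,n}$ by the swap involution $\phi(\tau)=(\sigma_k~\sigma_{k+1})\circ\tau$ on $\T_n$, checking that $\phi$ maps $\{CM^*(\sigma,\tau)=\sigma_{k+1}\}$ into $\{CM^*(\sigma,\tau)=\sigma_k\}$; your case analysis (record stop at $j$, fallback with $k\geqslant 2$, fallback with $k=1$) is complete and correct, including the observation that in the delicate $k=1$ fallback case the positions $t_n,\dots,n-1$ still carry no record relative to $\sigma_2$. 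Your argument is more elementary and more robust: it needs no closed formula and works verbatim for any threshold rule that stops at the first post-threshold record, whereas the paper's computation buys an exact expression for the difference (not just its sign). Two small blemishes in your optional remark only: the split-off term $\frac{1}{(n-k)\binom{n-1}{k-1}}$ should be suppressed when $k>n-t_n+1$ (there the sum defining $r_{k,n}$ is already empty and the difference is $0$), and non-negativity does not actually require $t_n\geqslant 2$, since for $t_n=1$ the prefactor $\frac{t_n-1}{n}$ vanishes; neither affects your main involution proof.
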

\begin{proof}
By Corollary \ref{cor:rkn_dec} (see the Appendix) we know that the sequence $\{r_{k,n}\}_{k \in [n]}$ is non-increasing in $k$. By (\ref{lemma:pmix_prob}) we have
\[
q_{k,n} - q_{k+1,n} = p \cdot (r_{k,n} - r_{k+1,n}) \geqslant 0.
\]
\qed
\end{proof}

\begin{lemma} \label{lemma:max_ratio}
Fix $\sigma \in \S_n$ and let $\tau$ be chosen uniformly at random from $\S_n$. Let $l\geqslant 1$ be a constant, and let $\delta \in [0,1]$. For any $k \in [n-1]$, if $n \geqslant 7$ then
\[
\frac{q_{1,n}-\delta}{q_{l+1,n}} \geqslant \frac{q_{k,n}-\delta}{q_{k+l,n}}.
\]
\end{lemma}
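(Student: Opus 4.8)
The plan is to show that the ratio $g(k) := \frac{q_{k,n}-\delta}{q_{k+l,n}}$ is maximized at $k=1$. Since $\delta \geqslant 0$ and the numerator $q_{k,n}-\delta$ decreases in $k$ (by Lemma~\ref{lemma:seq_dec}), it would suffice to show that the map $k \mapsto \frac{q_{k,n}}{q_{k+l,n}}$ is itself non-increasing in $k$; but that may be too strong. Instead I would reduce the claim, via the relation $q_{k,n} = p\,r_{k,n} + \frac{1-p}{n}$ from \eqref{lemma:pmix_prob}, to a statement purely about the sequence $\{r_{k,n}\}$: writing $b_{k,n} := p\,r_{k,n}$ and $c := \frac{1-p}{n}$, we need $\frac{b_{1,n}+c-\delta}{b_{l+1,n}+c} \geqslant \frac{b_{k,n}+c-\delta}{b_{k+l,n}+c}$ for all $k\in[n-1]$ once $n\geqslant 7$.

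First I would record the key structural facts about $\{r_{k,n}\}$ that are available from the Appendix: it is non-increasing (Corollary~\ref{cor:rkn_dec}), $r_{1,n} = \frac{t_n-1}{n}\sum_{i=t_n}^n \frac{1}{i-1}$ is bounded above (roughly by $1/e$ plus lower-order terms), and — crucially — the explicit formula in Fact~\ref{fact:kth_cand} lets one lower-bound $r_{k,n}$ and $r_{k+l,n}$ in terms of $r_{1,n}$. The cleanest route is probably to prove the stronger cross-multiplied inequality
\[
(q_{1,n}-\delta)\,q_{k+l,n} \;\geqslant\; (q_{k,n}-\delta)\,q_{l+1,n},
\]
i.e. $q_{1,n}q_{k+l,n} - q_{k,n}q_{l+1,n} \geqslant \delta\,(q_{k+l,n} - q_{l+1,n})$. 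The right-hand side is $\leqslant 0$ because $k+l \geqslant l+1$ and the sequence is non-increasing, so it is enough to prove the $\delta$-free inequality $q_{1,n}\,q_{k+l,n} \geqslant q_{k,n}\,q_{l+1,n}$, equivalently $\frac{q_{1,n}}{q_{k,n}} \geqslant \frac{q_{l+1,n}}{q_{k+l,n}}$. This is a "log-concavity / ratio-monotonicity'' type statement for the sequence $\{q_{k,n}\}$, and since $q_{k,n}$ is an affine function of $r_{k,n}$ with positive slope and positive intercept, it should follow from the corresponding ratio bound for $\{r_{k,n}\}$ together with the fact that adding the positive constant $\frac{1-p}{n}$ only pulls ratios toward $1$.

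The main obstacle will be establishing the required monotonicity of ratios for the raw secretary-selection probabilities $\{r_{k,n}\}$, for which one must work with the hypergeometric-type sum $\sum_{i=t_n}^{n-k+1}\frac{\binom{n-k}{i-1}}{\binom{n-1}{i-1}}\frac{1}{i-1} + \frac{1}{n-1}$ from Fact~\ref{fact:kth_cand} and control how it changes when $k$ increases by $1$ versus by $l$; this is presumably where the hypothesis $n\geqslant 7$ enters, to absorb small-$n$ boundary effects and the $\frac{1}{n-1}$ correction term. I would handle it by bounding $r_{k+1,n}/r_{k,n}$ from below and above uniformly in $k$ (using that each summand's ratio $\frac{\binom{n-k-1}{i-1}}{\binom{n-k}{i-1}} = \frac{n-k-i+1}{n-k}$ is increasing in $i$), then iterate to compare the $k$-to-$k{+}l$ gap against the $1$-to-$(l{+}1)$ gap, finally reinstating $\delta$ and the intercept $\frac{1-p}{n}$ via the reductions above. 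If a fully uniform ratio bound proves awkward, the fallback is to split into the regime $k = o(n)$, where the asymptotics $r_{k,n}\sim \frac1e a_k$ with $a_k$ explicit make the inequality transparent, and the regime $k = \Theta(n)$, where $r_{k,n}$ is already small and crude bounds suffice.
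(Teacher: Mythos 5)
Your skeleton matches the paper's up to the point where the real work starts: stripping $\delta$ via monotonicity of $q_{k,n}$ (Lemma~\ref{lemma:seq_dec}) is exactly the paper's move (your cross-multiplied reduction, with the observation that $\delta\,(q_{k+l,n}-q_{l+1,n})\leqslant 0$, is equivalent to its comparison of $\delta/q_{l+1,n}$ with $\delta/q_{k+l,n}$), and both of you are then left with the $\delta$-free inequality $q_{1,n}/q_{l+1,n}\geqslant q_{k,n}/q_{k+l,n}$. The concrete gap is your transfer step from $r$ to $q$: the claim that, since $q_{k,n}=p\,r_{k,n}+\frac{1-p}{n}$, the $q$-inequality "should follow" from $r_{1,n}/r_{l+1,n}\geqslant r_{k,n}/r_{k+l,n}$ because adding the common intercept "only pulls ratios toward $1$" is a non sequitur --- if both ratios are pulled toward $1$, the larger one can in general be pulled below the smaller one. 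Writing $c=\frac{1-p}{n}$, the cross-multiplied form of what you need is
\[
p^2\bigl(r_{1,n}r_{k+l,n}-r_{k,n}r_{l+1,n}\bigr)+p\,c\,\bigl(r_{1,n}+r_{k+l,n}-r_{k,n}-r_{l+1,n}\bigr)\geqslant 0,
\]
so besides the ratio inequality you also need the difference inequality $r_{1,n}-r_{k,n}\geqslant r_{l+1,n}-r_{k+l,n}$. This happens to be available: it follows from the ratio inequality together with $r_{l+1,n}\geqslant r_{k+l,n}$ and the fact that both ratios are $\geqslant 1$, or directly from Lemma~\ref{lemma:rkn_difference}, since the consecutive gaps $r_{k,n}-r_{k+1,n}$ decrease in $k$. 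So the step is repairable, but as written it would not stand, and you nowhere supply the needed extra ingredient.

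The second issue is that the hardest part --- a finite-$n$ ($n\geqslant 7$) ratio statement for $r_{k,n}$ valid for all $k$ --- is only sketched, and your sketch differs from the paper's actual route. The paper never proves finite-$n$ ratio monotonicity of $r_{k,n}$; it proves the strict asymptotic bound $\lim_{n\to\infty} r_{k,n}/r_{k+l,n}<1/a_{l+1}$ for $k\geqslant 2$ (Theorem~\ref{thm:ratio_rnk}, via the limit log-convexity in Lemma~\ref{lemma:ratio_rnk_dec} and Lemma~\ref{lemma:k_linear}) and then absorbs the intercept $\frac{1-p}{n}$ by a case analysis on whether $r_{k+l,n}$ is of order $c/n$, $o(1/n)$ or $\omega(1/n)$, using the strictness as slack; the claim that $n\geqslant 7$ suffices is then asserted rather than derived. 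Your primary plan (uniform bounds on consecutive ratios of the hypergeometric sum, iterated over $l$ steps) would, if carried out, give a genuinely finite-$n$ and arguably cleaner argument than the paper's --- though note the summand ratio $\frac{\binom{n-k-1}{i-1}}{\binom{n-k}{i-1}}=\frac{n-k-i+1}{n-k}$ is \emph{decreasing} in $i$, not increasing as you state --- while your fallback regime split ($k=o(n)$ versus $k=\Theta(n)$) is essentially the paper's case analysis keyed to $k$ instead of to the size of $r_{k+l,n}$, and like the paper's argument it only yields "for $n$ large enough", not $n\geqslant 7$, without extra verification. In short: right structure, one invalid-as-stated (but fixable) inference at the intercept-transfer step, and the quantitative core left open precisely where the paper's appendix does its work.
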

\begin{proof}
First, we are going to show that for $k \geqslant 2$
\begin{equation} \label{ineq:qlimits}
\lim_{n \to \infty} \frac{q_{1,n}}{q_{l+1,n}} > \lim_{n \to \infty} \frac{q_{k,n}}{q_{k+l,n}}.
\end{equation}
By (\ref{lemma:pmix_prob}) and by Fact \ref{fact:kth_cand} we have
\[
\lim_{n \to \infty} \frac{q_{1,n}}{q_{l+1,n}} = \lim_{n \to \infty} \frac{r_{1,n}}{r_{l+1,n}} = \frac{\frac{1}{e} a_{1}}{\frac{1}{e} a_{\ell+1}} = \frac{1}{a_{l+1}} 
\]
and
\[ \lim_{n \to \infty} \frac{q_{k,n}}{q_{k+l,n}} = \lim_{n \to \infty} \frac{p \cdot r_{k,n} + \frac{1-p}{n}}{p \cdot r_{k+l,n} + \frac{1-p}{n}}.
\]
By Theorem \ref{thm:ratio_rnk} (see the Appendix) we know that for $k \geqslant 2$ we have $\lim_{n \to \infty} \frac{r_{k,n}}{r_{k+l,n}}  < \frac{1}{a_{l+1}}$. Thus for sufficiently large $n$ (one can verify that $n \geqslant 7$ is enough) we can write $r_{k,n} < \frac{r_{k+l,n}}{a_{l+1}}$ and thereby get (note that $a_{l+1} < 1$)
\begin{align*}
\frac{p \cdot r_{k,n} + \frac{1-p}{n}}{p \cdot r_{k+l,n} + \frac{1-p}{n}} & < \frac{p \cdot \frac {r_{k+l,n}}{a_{l+1}} + \frac{1-p}{n}}{p \cdot r_{k+l,n} + \frac{1-p}{n}} \\
& \xrightarrow{n \to \infty}
\begin{cases}
 \frac{p \cdot \frac{c}{a_{l+1}} + 1 - p}{p \cdot c + 1 - p} < \frac{1}{a_{l+1}} \quad & \textnormal{for} \quad r_{k+l,n} = c \cdot 1/n + o(1/n), \\
 1 < \frac{1}{a_{l+1}} \quad & \textnormal{for} \quad r_{k+l,n} = o(1/n).
\end{cases}
\end{align*}
Whenever $r_{k+l,n} = \omega(1/n)$ we also have $r_{k,n} = \omega(1/n)$ (indeed, by Corollary \ref{cor:rkn_dec} the sequence $r_{k,n}$ is non-increasing in $k$) and again by Theorem \ref{thm:ratio_rnk} (see the Appendix) we get
\[
\lim_{n \to \infty} \frac{p \cdot r_{k,n} + \frac{1-p}{n}}{p \cdot r_{k+l,n} + \frac{1-p}{n}} = \lim_{n \to \infty} \frac{r_{k,n}}{r_{k+l,n}}  < \frac{1}{a_{l+1}}.
\]

We conclude that for any $k \in [n-1]$ and sufficiently large $n$ (again $n \geqslant 7$ is enough) we get $\frac{q_{1,n}}{q_{l+1,n}} \geqslant \frac{q_{k,n}}{q_{k+l,n}}$ and, by Lemma \ref{lemma:seq_dec},
\[
\frac{q_{1,n}-\delta}{q_{l+1,n}} = \frac{q_{1,n}}{q_{l+1,n}} - \frac{\delta}{q_{l+1,n}} \geqslant \frac{q_{1,n}}{q_{l+1,n}} - \frac{\delta}{q_{k+l,n}} \geqslant \frac{q_{k,n}}{q_{k+l,n}} - \frac{\delta}{q_{k+l,n}} = \frac{q_{k,n}-\delta}{q_{k+l,n}}.
\]
\qed
\end{proof}

\begin{theorem} \label{thm:main_eps}
Fix $\sigma \in \S_n$ and let $\tau$ be chosen uniformly at random from $\S_n$. Consider a metric space $(\S_n,d_l)$ for $l \geqslant 1$ being a constant. Let $p \in (0,1]$ and $\delta \in [0,1]$. If $n \geqslant 7$ and 
\[
\ep \geqslant \begin{cases}
\ln\left( \frac{q_{1,n} - \delta}{q_{l+1,n}} \right) \sim  \ln\left( \frac{p - \delta \cdot e}{a_{l+1} \cdot p} \right) & \quad \textnormal{for} \quad \delta < q_{1,n}-q_{l+1,n} \\
0 & \quad \textnormal{for} \quad \delta \geqslant q_{1,n}-q_{l+1,n},
\end{cases}
\]
then the $p$-mix $M$ (on $M^*$ and $\tilde{M}$) is $(\ep,\delta)$-differentially private. The bounds are tight.
\end{theorem}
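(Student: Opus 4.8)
The plan is to derive Theorem~\ref{thm:main_eps} from the general characterization in Theorem~\ref{thm:gen_eps} and then to pin down, among all position pairs $(i,j)$ with $|i-j|\le l$, the one that determines the optimal $\ep$. Everything else is bookkeeping once the two auxiliary lemmas (Lemma~\ref{lemma:seq_dec} and Lemma~\ref{lemma:max_ratio}) are available.

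First I would record that for the $p$-mix $M$ on $M^{*}$ and $\tilde M$ the probabilities $q_{k,n}=\P[CM(\sigma,\tau)=\sigma_k]$ do not depend on the fixed base permutation $\sigma$: this is inherited from $M^{*}$ (Fact~\ref{fact:kth_cand}) and from $\tilde M$ (Fact~\ref{fact:uniform_private}), each of which selects the $k$-th best candidate with a probability depending only on $k$ and $n$. Hence Theorem~\ref{thm:gen_eps} applies to $M$ verbatim, and it says that $M$ is $(\ep,\delta)$-DP w.r.t.\ $d_l$ exactly when $\ep\ge\max_{|i-j|\le l}\ln\{(q_{i,n}-\delta)/q_{j,n}\}$ provided there is a pair $(i,j)$ with $i\neq j$, $|i-j|\le l$, $q_{i,n}\ge q_{j,n}$ and $\delta<q_{i,n}-q_{j,n}$ (call such pairs \emph{critical}), and $\ep\ge0$ otherwise, with both bounds tight. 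Since any non-critical pair contributes a ratio $\le 1$ (hence a non-positive logarithm), when a critical pair exists the maximum is realized at a critical one. So the theorem reduces to (i) finding which critical pair, if any, maximizes the ratio, and (ii) deciding in which of the two regimes the hypotheses place us.

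Next I would carry out the maximization. By Lemma~\ref{lemma:seq_dec} the sequence $\{q_{k,n}\}_{k\in[n]}$ is non-increasing, so every critical pair has the form $(i,j)$ with $i<j\le i+l$ and satisfies $q_{i,n}-\delta>q_{j,n}\ge0$; since enlarging $j$ only shrinks the positive denominator and since $n\ge7$, Lemma~\ref{lemma:max_ratio} gives $(q_{i,n}-\delta)/q_{j,n}\le(q_{i,n}-\delta)/q_{i+l,n}\le(q_{1,n}-\delta)/q_{l+1,n}$ for every such pair. Consequently, whenever a critical pair exists, the value forced by Theorem~\ref{thm:gen_eps} is exactly $\ln\{(q_{1,n}-\delta)/q_{l+1,n}\}$, attained at the pair $(1,l+1)$; and $(1,l+1)$ is itself critical iff $\delta<q_{1,n}-q_{l+1,n}$, the inequality $q_{1,n}\ge q_{l+1,n}$ being automatic by monotonicity.

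Finally I would split into the two cases of the statement. If $\delta<q_{1,n}-q_{l+1,n}$, then $(1,l+1)$ is critical, so by the previous step and the tightness part of Theorem~\ref{thm:gen_eps} the optimal threshold is $\ep\ge\ln\{(q_{1,n}-\delta)/q_{l+1,n}\}$; substituting $q_{1,n}\sim p/e$ and $q_{l+1,n}\sim\frac{p}{e}a_{l+1}$ from Corollary~\ref{cor:pmix_asymp} rewrites the argument of the logarithm as $(p-\delta e)/(a_{l+1}p)$, which yields the claimed asymptotics. If instead $\delta\ge q_{1,n}-q_{l+1,n}$, then $(q_{1,n}-\delta)/q_{l+1,n}\le1$; any critical pair $(i,j)$ would force $(q_{i,n}-\delta)/q_{j,n}>1$, contradicting the bound just established, so no critical pair exists and $\ep\ge0$ suffices (and is tight). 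The step I expect to demand the most care is precisely the reduction to the single pair $(1,l+1)$ — the correct interplay of the monotonicity of $\{q_{k,n}\}$ with Lemma~\ref{lemma:max_ratio}, together with watching the sign of $q_{i,n}-\delta$ so that dividing preserves the inequalities, and handling the boundary positions $i>n-l$. Of course Lemma~\ref{lemma:max_ratio} is itself the place where the collateral Appendix estimates on the ratios $r_{k,n}/r_{k+l,n}$ (Theorem~\ref{thm:ratio_rnk}, Corollary~\ref{cor:rkn_dec}) quietly do the genuine work.
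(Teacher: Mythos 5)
Your proposal is correct and follows essentially the same route as the paper: invoke Theorem~\ref{thm:gen_eps}, use Lemma~\ref{lemma:seq_dec} and Lemma~\ref{lemma:max_ratio} to reduce the maximum over pairs to the single pair $(1,l+1)$, split on whether $\delta < q_{1,n}-q_{l+1,n}$, and obtain the asymptotic form from Corollary~\ref{cor:pmix_asymp}. Your write-up is in fact somewhat more careful than the paper's (noting the independence of $q_{k,n}$ from $\sigma$, the sign of $q_{i,n}-\delta$, and the boundary positions $i>n-l$), but the argument is the same.
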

\begin{proof}
	By Theorem \ref{thm:gen_eps} we know that if there exists at least one pair $i,j \in [n]$ such that $i \neq j$, $|i-j| \leqslant l$, $q_{i,n} \geqslant q_{j,n}$ and $\delta < q_{i,n}-q_{j,n}$ then
	\[
	\ep \geqslant \max_{\substack{1 \leqslant i,j \leqslant n \\ |i-j| \leqslant l}} \ln\left\{\frac{q_{i,n}-\delta}{q_{j,n}}\right\}.
	\] 
	Otherwise $\ep \geq 0$. Consequently, by Lemma \ref{lemma:seq_dec} and Lemma \ref{lemma:max_ratio} we get
	\[
	\ep \geqslant \begin{cases}
		\ln\left( \frac{q_{1,n} - \delta}{q_{l+1,n}} \right) & \quad \textnormal{for} \quad \delta < q_{1,n}-q_{l+1,n} \\
		0 & \quad \textnormal{for} \quad \delta \geqslant q_{1,n}-q_{l+1,n}.
	\end{cases}
	\]
	Additionally, by Corollary \ref{cor:pmix_asymp} we get
	\[
	\ln \left( \frac{q_{1,n} - \delta}{q_{l+1,n}} \right) \sim \ln \left( \frac{p - \delta \cdot e}{a_{l+1} \cdot p} \right).
	\]
\qed
\end{proof}

Figure \ref{fig:p_single} shows the shape of the asymptotic region of pairs $(\ep,\delta)$ for which the $p$-mix $M$ is $(\ep,\delta)$-differentially private.
\begin{figure}[!ht]
\centering
\includegraphics[width=0.5\textwidth]{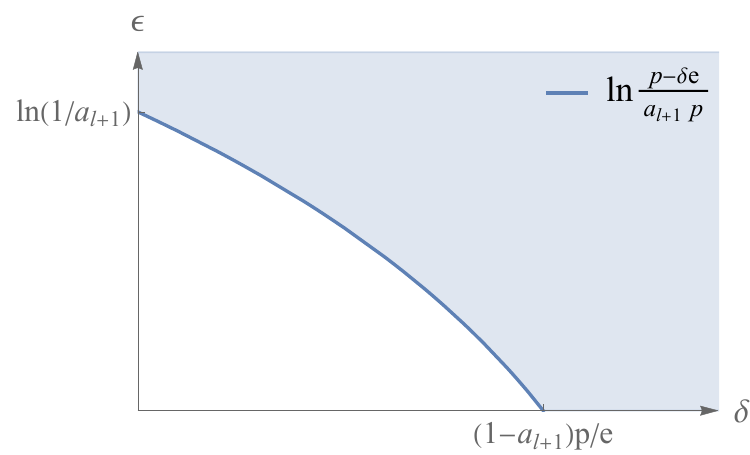}
\caption{The shaded area is an asymptotic region in which the $p$-mix $M$ is $(\ep,\delta)$-differentially private.}
\label{fig:p_single}
\end{figure}

\begin{figure}[ht]
	\begin{minipage}[b]{0.49\textwidth}
		\includegraphics[width=0.95\textwidth]{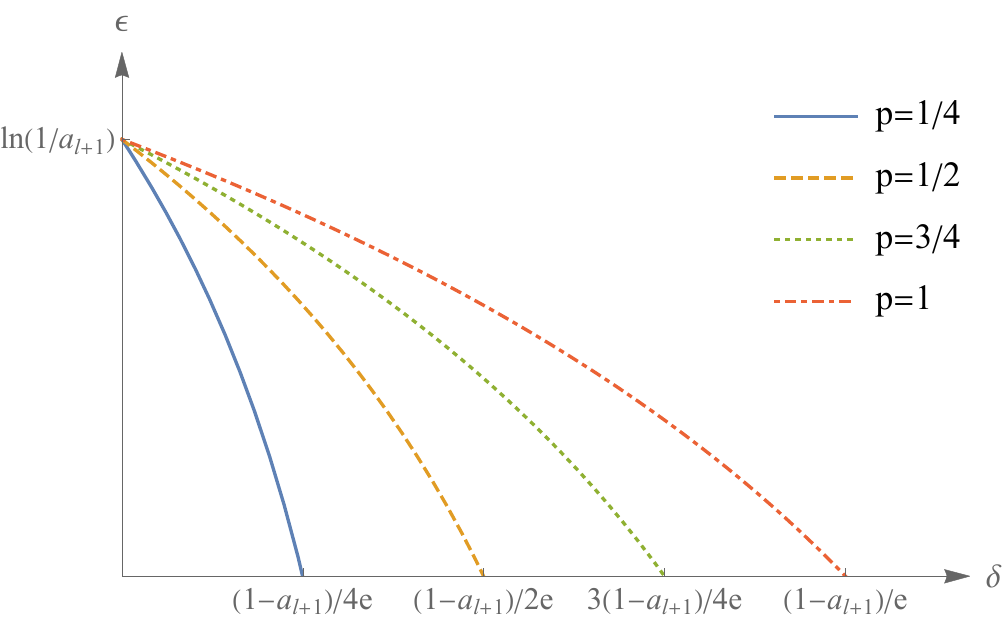}
		\caption{The boundaries of the asymptotic regions in which the $p$-mix $M$ is $(\ep,\delta)$-differentially private for a given $l$ and various values of $p$.}
		\label{fig:p_comparison}
	\end{minipage}
	\hspace{10pt}
	\begin{minipage}[b]{0.49\textwidth}
		\includegraphics[width=0.95\textwidth]{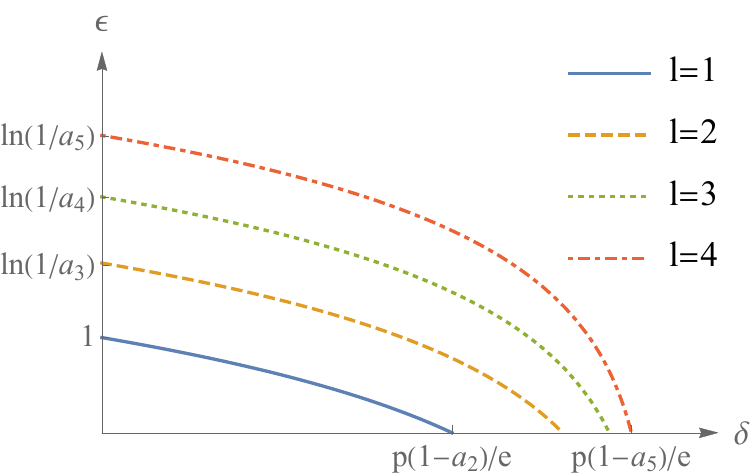}
		\caption{The boundaries of the asymptotic regions in which the $p$-mix $M$ is $(\ep,\delta)$-differentially private for a given $p$ and various values of $l$.}
		\label{fig:l_comparison}
	\end{minipage}
\end{figure}

Figure \ref{fig:p_comparison} shows how the boundaries of the asymptotic regions in which the $p$-mix $M$ is $(\ep,\delta)$-differentially private change with $p$ for a given~$l$.
Figure \ref{fig:l_comparison} shows how the boundaries of an asymptotic region in which the $p$-mix $M$ is $(\ep,\delta)$-differentially private change with $l$ for a  given $p$. 

\begin{corollary} \label{cor:delta2}
Fix $\sigma \in \S_n$ and let $\tau$ be chosen uniformly at random from $\S_n$. Consider a metric space $(\S_n,d_l)$ for $l \geqslant 1$ being a constant. Let $p\in (0,1]$ and $\ep \geqslant 0$. If $n \geqslant 7$ and 
\[
\delta \geqslant \begin{cases}
    q_{1,n} - e^{\ep}q_{l+1,n} \sim \frac{p}{e}\left(1-a_{l+1}e^{\ep}\right) & \quad \textnormal{for} \quad \ep < \ln\left(\frac{q_{1,n}}{q_{l+1,n}}\right)\\
		0 & \quad \textnormal{for} \quad \ep \geqslant \ln\left(\frac{q_{1,n}}{q_{l+1,n}}\right),
		\end{cases}
\]
then the $p$-mix $M$ (on $M^*$ and $\tilde{M}$) is $(\ep,\delta)$-differentially private.
\end{corollary}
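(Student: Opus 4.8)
The plan is to obtain this corollary directly from Theorem~\ref{thm:main_eps} by inverting, algebraically, the relationship between $\ep$ and $\delta$. Theorem~\ref{thm:main_eps}---via its ``if'' direction, which itself rests on Theorem~\ref{thm:gen_eps} together with Lemmas~\ref{lemma:seq_dec} and~\ref{lemma:max_ratio}---says that for $n\ge 7$ the $p$-mix $M$ is $(\ep,\delta)$-differentially private as soon as $\ep \ge \ln\!\bigl((q_{1,n}-\delta)/q_{l+1,n}\bigr)$ in the regime $\delta < q_{1,n}-q_{l+1,n}$, and for every $\ep\ge 0$ once $\delta \ge q_{1,n}-q_{l+1,n}$. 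So the first step is to record this and to rewrite the nontrivial inequality: since $q_{l+1,n}>0$, for $\delta < q_{1,n}$ one has $\ep \ge \ln\!\bigl((q_{1,n}-\delta)/q_{l+1,n}\bigr)$ if and only if $e^{\ep}q_{l+1,n} \ge q_{1,n}-\delta$, i.e. if and only if $\delta \ge q_{1,n}-e^{\ep}q_{l+1,n}$.

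The second step is a case split on the sign of $q_{1,n}-e^{\ep}q_{l+1,n}$, which coincides with the split on whether $\ep$ is below or above $\ln(q_{1,n}/q_{l+1,n})$. If $\ep \ge \ln(q_{1,n}/q_{l+1,n})$, then $q_{1,n}-e^{\ep}q_{l+1,n}\le 0$, so $\ep$ already exceeds $\ln\!\bigl((q_{1,n}-\delta)/q_{l+1,n}\bigr)$ for every $\delta\ge 0$, and $M$ is $(\ep,\delta)$-DP with no condition on $\delta$ beyond $\delta\ge 0$. If $\ep < \ln(q_{1,n}/q_{l+1,n})$, then $q_{1,n}-e^{\ep}q_{l+1,n}>0$, and assuming $\delta\ge q_{1,n}-e^{\ep}q_{l+1,n}$ either places $\delta$ in the regime $\delta<q_{1,n}-q_{l+1,n}$, where the rearranged bound of Theorem~\ref{thm:main_eps} is met by hypothesis, or in the regime $\delta\ge q_{1,n}-q_{l+1,n}$, where DP is automatic; in both subcases $M$ is $(\ep,\delta)$-DP. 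The only bookkeeping here is the observation that $q_{1,n}-e^{\ep}q_{l+1,n}\le q_{1,n}-q_{l+1,n}$ for all $\ep\ge 0$, with equality iff $\ep=0$, so for $\ep>0$ the two regimes abut along the interval $[\,q_{1,n}-e^{\ep}q_{l+1,n},\,q_{1,n}-q_{l+1,n})$ and for $\ep=0$ they meet at the single point $\delta=q_{1,n}-q_{l+1,n}$.

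The third step is the asymptotic form. By Corollary~\ref{cor:pmix_asymp}, $q_{1,n}\sim p/e$ and $q_{l+1,n}\sim (p/e)\,a_{l+1}$, hence $q_{1,n}-e^{\ep}q_{l+1,n}\sim \frac{p}{e}\bigl(1-a_{l+1}e^{\ep}\bigr)$; this is a legitimate $\sim$ statement provided $a_{l+1}e^{\ep}\ne 1$, which holds throughout the first case since there $\ep$ is a fixed constant with $\ep<\ln(q_{1,n}/q_{l+1,n})\xrightarrow{n\to\infty}\ln(1/a_{l+1})$, forcing $a_{l+1}e^{\ep}<1$ for large $n$ (the knife-edge value $\ep=\ln(1/a_{l+1})$ is harmless and may be absorbed into the second case).

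I do not expect a genuine obstacle: the corollary is just the ``solve for $\delta$'' reading of Theorem~\ref{thm:main_eps}, and the one thing needing care is keeping the two case boundaries---$\ep$ versus $\ln(q_{1,n}/q_{l+1,n})$ and $\delta$ versus $q_{1,n}-q_{l+1,n}$---consistent, which the sign computation above settles.
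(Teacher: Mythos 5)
Your proposal is correct and follows essentially the same route as the paper: the paper's proof simply observes that $\delta \geqslant q_{1,n} - e^{\ep}q_{l+1,n}$ is the rearrangement of $\ep \geqslant \ln\left(\frac{q_{1,n}-\delta}{q_{l+1,n}}\right)$ from Theorem~\ref{thm:main_eps} and invokes Corollary~\ref{cor:pmix_asymp} for the asymptotics. Your additional bookkeeping on the case boundaries and the sign of $q_{1,n}-e^{\ep}q_{l+1,n}$ is just a more explicit spelling-out of the same argument.
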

\begin{proof}
Note that the inequality $\delta \geqslant q_{1,n} - e^{\ep}q_{l+1,n}$ is equivalent to the inequality $\ep \geqslant \ln \left( \frac{q_{1,n} - \delta}{q_{l+1,n}} \right)$ from Theorem \ref{thm:main_eps} and by Corollary \ref{cor:pmix_asymp} we get $q_{1,n} - e^{\ep}q_{l+1,n} \sim \frac{p}{e}\left(1-a_{l+1}e^{\ep}\right)$.
\qed
\end{proof}

\begin{corollary}
Fix $\sigma \in \S_n$ and let $\tau$ be chosen uniformly at random from $\S_n$. Consider a metric space $(\S_n,d_l)$ for $l \geqslant 1$ being a constant. Let $\delta \in [0,1]$ and $\ep \geqslant 0$. If $n \geqslant 7$ and 
\[
p \leqslant \begin{cases}
   \frac{\delta + \frac{1}{n}(e^{\ep}-1)}{r_{1,n}-e^{\ep}r_{l+1,n} + \frac{1}{n}(e^{\ep}-1)} \sim \frac{e \cdot  \delta}{1-e^{\ep} a_{l+1}} & \quad \textnormal{for} \quad \delta < r_{1,n}-e^{\ep}r_{l+1,n}\\
		1 & \quad \textnormal{for} \quad \delta \geqslant r_{1,n}-e^{\ep}r_{l+1,n},
		\end{cases}
\]
then  the $p$-mix $M$ (on $M^*$ and $\tilde{M}$) is $(\ep,\delta)$-differentially private.
\end{corollary}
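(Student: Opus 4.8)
The plan is to reduce the statement to the characterization of $(\ep,\delta)$-differential privacy of the $p$-mix already established in Theorem~\ref{thm:main_eps} (equivalently, in Corollary~\ref{cor:delta2}) and then to solve the resulting inequality for $p$ instead of for $\ep$ or $\delta$. Recall that, for $n \geqslant 7$, the $p$-mix $M$ is $(\ep,\delta)$-differentially private as soon as $\delta \geqslant q_{1,n} - e^{\ep} q_{l+1,n}$ (when the right-hand side is non-positive this holds trivially for every $\delta \geqslant 0$). So it suffices to show that the stated upper bound on $p$ forces $\delta \geqslant q_{1,n} - e^{\ep} q_{l+1,n}$.

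First I would substitute the identity $q_{k,n} = p\, r_{k,n} + \frac{1-p}{n}$ from~(\ref{lemma:pmix_prob}) to rewrite $q_{1,n} - e^{\ep} q_{l+1,n}$ as an affine function of $p$:
\[
q_{1,n} - e^{\ep} q_{l+1,n} = p\left(r_{1,n} - e^{\ep} r_{l+1,n} + \frac{1}{n}(e^{\ep}-1)\right) - \frac{1}{n}(e^{\ep}-1) =: p\,\kappa_n - \frac{1}{n}(e^{\ep}-1),
\]
where $\kappa_n := r_{1,n} - e^{\ep} r_{l+1,n} + \frac{1}{n}(e^{\ep}-1)$. The required differential privacy condition $\delta \geqslant q_{1,n} - e^{\ep} q_{l+1,n}$ then reads $p\,\kappa_n \leqslant \delta + \frac{1}{n}(e^{\ep}-1)$, and it remains to discuss this inequality according to the sign of $r_{1,n}-e^{\ep}r_{l+1,n}$, exactly as in the statement.

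If $\delta < r_{1,n} - e^{\ep} r_{l+1,n}$, then (using $\ep \geqslant 0$, so $\frac{1}{n}(e^{\ep}-1) \geqslant 0$, and $\delta \geqslant 0$) we get $\kappa_n \geqslant r_{1,n}-e^{\ep}r_{l+1,n} > \delta \geqslant 0$, so division is legitimate and the condition $p\,\kappa_n \leqslant \delta + \frac{1}{n}(e^{\ep}-1)$ is exactly $p \leqslant \big(\delta + \frac{1}{n}(e^{\ep}-1)\big)/\kappa_n$; moreover this bound is strictly below $1$, consistently with it being a genuine restriction. If instead $\delta \geqslant r_{1,n} - e^{\ep} r_{l+1,n}$, I claim every $p \in [0,1]$ works: the affine function $p \mapsto p\,\kappa_n - \frac{1}{n}(e^{\ep}-1)$ attains its maximum over $[0,1]$ either at $p=1$, with value $\kappa_n - \frac1n(e^{\ep}-1) = r_{1,n} - e^{\ep} r_{l+1,n} \leqslant \delta$, or at $p=0$, with value $-\frac{1}{n}(e^{\ep}-1) \leqslant 0 \leqslant \delta$; in both sub-cases $q_{1,n} - e^{\ep} q_{l+1,n} \leqslant \delta$, so the differential privacy condition holds. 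Finally, for the asymptotic form I would insert $r_{1,n} \sim 1/e$ and $r_{l+1,n} \sim a_{l+1}/e$ (Fact~\ref{fact:kth_cand}, using $a_1=1$) together with $\frac{1}{n}(e^{\ep}-1) \to 0$, which sends $\kappa_n$ to $\frac1e(1 - e^{\ep} a_{l+1})$ and hence $\big(\delta + \frac{1}{n}(e^{\ep}-1)\big)/\kappa_n \to e\delta/(1 - e^{\ep} a_{l+1})$.

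The computations are entirely routine; the only point that needs care is the case $\delta \geqslant r_{1,n} - e^{\ep} r_{l+1,n}$, where $\kappa_n$ may be non-positive, so one cannot simply divide and must instead argue via the location of the maximum of the affine function that $M$ is differentially private for the whole range $p \in [0,1]$. One should also record the harmless corner cases ($\ep=0$, i.e. $e^{\ep}-1=0$, and $\delta=0$), and note that the symbol ``$\sim$'' is to be read in the regime where $\delta$ is a fixed positive constant and $e^{\ep}a_{l+1} \neq 1$, so that the limit of the denominator is a non-zero constant.
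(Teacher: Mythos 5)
Your proposal is correct and follows essentially the same route as the paper: reduce to Corollary~\ref{cor:delta2}, substitute the identity $q_{k,n}=p\,r_{k,n}+\frac{1-p}{n}$, solve the resulting linear inequality for $p$, and get the asymptotics from Fact~\ref{fact:kth_cand}. You are in fact slightly more careful than the paper's one-line proof, which silently divides without discussing the sign of the denominator and does not spell out the case $\delta \geqslant r_{1,n}-e^{\ep}r_{l+1,n}$, where your affine-maximum argument shows every $p\in[0,1]$ works.
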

\begin{proof}
By Corollary \ref{cor:delta2} we know that for $\ep \geqslant 0$ the $p$-mix $M$ is $(\ep,\delta)$-differentially private if only $\delta \geqslant q_{1,n} - e^{\ep}q_{l+1,n}$. By (\ref{lemma:pmix_prob}) we can rewrite it as
\[
\delta \geqslant p \cdot r_{1,n} + \frac{1-p}{n} - e^{\ep}\left(p \cdot r_{l+1,n} + \frac{1-p}{n}\right) 
\]
which is equivalent to
\[
p \leqslant \frac{\delta + \frac{1}{n}(e^{\ep}-1)}{r_{1,n}-e^{\ep}r_{l+1,n} + \frac{1}{n}(e^{\ep}-1)} \sim \frac{e \cdot  \delta}{1-e^{\ep} a_{l+1}},
\]
where asymptotics follows from Fact \ref{fact:kth_cand}. \qed
\end{proof}

\begin{figure}[ht]
	\begin{minipage}[b]{0.45\textwidth}
		\includegraphics[width=0.95\textwidth]{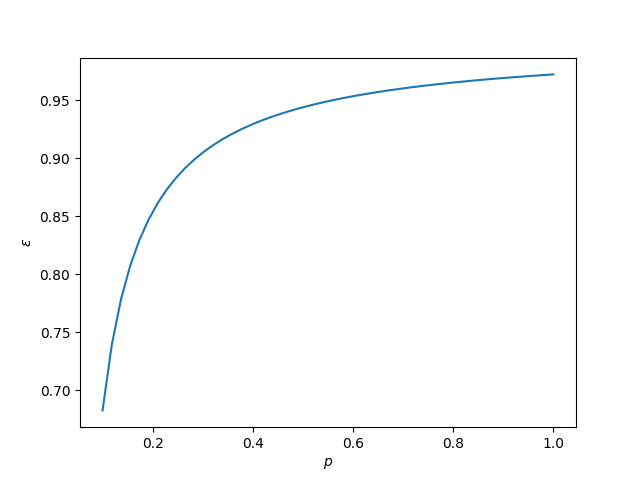}
		\caption{Privacy parameter $\varepsilon$ of the $p$-mix $M$ for $\delta=0.01$.}
		\label{fig:thm2}
	\end{minipage}
	\hspace{5pt}
	\begin{minipage}[b]{0.45\textwidth}
		\includegraphics[width=0.95\textwidth]{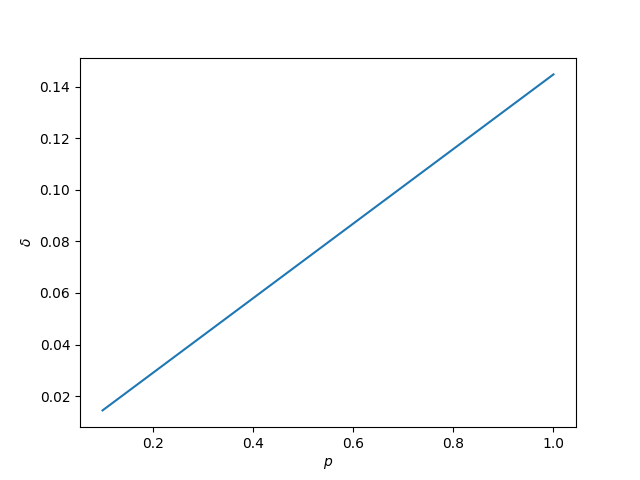}
		\caption{Privacy parameter $\delta$ of the $p$-mix $M$ for $\varepsilon=0.5$.}
		\label{fig:cor2}
	\end{minipage}
	\vskip\baselineskip
	\vspace{-15pt}
	\centering
	\begin{minipage}[b]{0.45\textwidth}
		\includegraphics[width=0.95\textwidth]{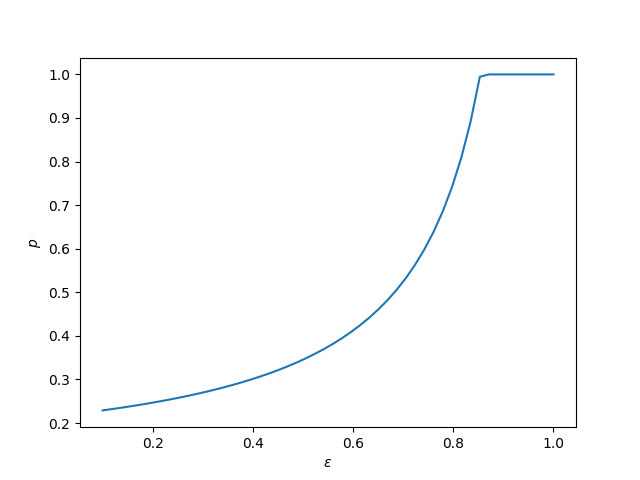}
		\caption{Parameter $p$ of the $p$-mix $M$ for $\delta~=~0.05$.}
		\label{fig:cor3}
	\end{minipage}   
	\vspace{-22pt}
\end{figure}

Below we present also a few explicit examples of the relations between the parameters $\ep, \delta$ and $p$. All the examples consider asymptotic results w.r.t. metric $d_1$. 
When needed the parameters were set to $\varepsilon = 0.5$, and $\delta = 0.01$ or $\delta = 0.05$ commonly used in the analysis of differential privacy offered by a wide range of mechanisms (see e.g. \cite{Zhu,DworkAlgo}). Note however that the calibration of $\varepsilon,\delta$ parameters depends strongly on the considered scenarios. In a one-shot response mechanism such values may be adequate, however, if a given mechanism is assumed to be used many times or combined with other sources of knowledge by the adversary we often need to consider much lower values (\cite{Haeberlen}).
Figure~\ref{fig:thm2} shows the parameter $\ep$ as a function of $p$ for which the $p$-mix $M$ is $(\ep,0.01)$-differentially private. Note that here $\varepsilon$ does not exceed $1$ even for $p=1$, thus when the selector plays simply the optimal algorithm.  Figure~\ref{fig:cor2} shows the parameter $\delta$ as a function of $p$ for which the~$p$-mix~$M$ is $(0.5,\delta)$-differentially private. Note that if we demand $\delta$ to be at most $0.05$, then $p$ has to be around $0.35$.
Figure \ref{fig:cor3} shows how small the parameter $p$ has to be for the $p$-mix $M$ to obtain $(\varepsilon,0.05)$-differential privacy. Note that if we accept $\varepsilon \approx 0.85$ or higher, then already $p=1$ (thus simply the optimal strategy) is sufficient.





 \section{Conclusions and Future Work}\label{sect:conclusion}

\vspace{-3pt}

In this paper we have investigated the optimal stopping algorithms from the information hiding perspective. We have proposed a natural mechanism constructing suboptimal stopping algorithms to give better privacy properties. We have analyzed its effectiveness and privacy, and applied it to the classical secretary problem. This work can be seen as the first step towards the differentially private stopping algorithms. 



The problem we consider might, at first glance, resemble the differentially private auction problem (see~\cite{mcsherry2007mechanism}). Note however, that here we do not have the scores, the preferences are given by the permutation and they are only comparable, not quantifiable, so the exponential mechanism cannot be used. Moreover, due to the online nature of the problem, we cannot have full knowledge during the whole procedure. We also have to define a metric of similarity for inputs.

Here we have concentrated on the classical secretary problem. Nevertheless, the optimal stopping literature offers the whole variety of different models. One could, e.g., analyze the algorithm that optimizes the expected rank of the candidate (intuitively, we do not necessarily require the best candidate but at least 'good enough'). It may have better information-hiding properties inherently. One could consider Gusein-Zade models with parameter $k$ in which we win when the selected candidate is in top $k$. 
It would be challenging to work on models with different ordering, e.g., using partially ordered sets instead of linear order as a qualification ordering. However, in this case it is unclear whether a reasonable metric for the similarity of preferences could be proposed.

Our approach was to keep the final algorithm simple, so we based it on known, optimal one. Elseways, one might propose an entirely different algorithm, that is not optimal but has better information-hiding properties or is more effective for given privacy parameters. Note also that essentially we have focused on hiding information about preferences, not the participation of a specific candidate. Another area of research could be investigating whether participation hiding is feasible in such circumstances. Even if not for all candidates, maybe it could be possible for the majority of them.

We believe that this paper opens an interesting new research area lying at the crossroads of online algorithms and differential privacy.

\paragraph{Acknowledgements}

This research was partially supported by Polish National Science Centre Grant 2018/29/B/ST6/02969.

%
%
%

\vspace{-7pt}

\bibliographystyle{splncs-url}
\bibliography{main}

\begin{thebibliography}{10}

\bibitem{Lindley}
Lindley, D.:
\newblock Dynamic programming and decision theory.
\newblock Appl. Stat. - J. Roy. St. C \textbf{10}(1) (1961)  39--51

\bibitem{Ferguson}
Ferguson, T.S.:
\newblock Who solved the secretary problem?
\newblock Statist. Sci. \textbf{4}(3) (1989)  282--289

\bibitem{Stadje}
Stadje, W.:
\newblock Efficient stopping of a random series of partially ordered points.
\newblock Multiple Criteria Decision Making Theory and Application. Lecture
  Notes in Economics and Mathematical Systems \textbf{177} (1980)  430--447

\bibitem{Gnedin}
Gnedin, A.V.:
\newblock Multicriteria extensions of the best choice problem: Sequential
  selec- tion without linear order.
\newblock Contemp. Math. \textbf{125} (1992)  153--172

\bibitem{bin_tree}
Morayne, M.:
\newblock Partial-order analogue of the secretary problem the binary tree case.
\newblock Discret. Math. \textbf{184}(1-3) (1998)  165--181

\bibitem{Garrod}
Garrod, B., Morris, R.:
\newblock The secretary problem on an unknown poset.
\newblock Random Struct. Algor. \textbf{43}(4) (2013)  429--451

\bibitem{univ_poset}
Preater, J.:
\newblock The best-choice problem for partially ordered objects.
\newblock Oper. Res. Lett. \textbf{25}(4) (1999)  187--190

\bibitem{Kozik}
Kozik, J.:
\newblock Dynamic threshold strategy for universal best choice problem.
\newblock Proceedings of 21st International Meeting on Probabilistic,
  Combinatorial, and Asymptotic Methods in the Analysis of Algorithms (2010)
  439--452

\bibitem{Freij_Wastlund}
Freij, R., W{\"a}stlund, J.:
\newblock Partially ordered secretaries.
\newblock Electron. Commun. Prob. \textbf{15} (2010)  504--507

\bibitem{Kubicki_Morayne_2005}
Kubicki, G., Morayne, M.:
\newblock Graph-theoretic generalization of the secretary problem: The directed
  path case.
\newblock SIAM Journal on Discrete Mathematics \textbf{19}(3) (2005)  622--632

\bibitem{Przykucki_Sulkowska_2010}
Przykucki, M., Sulkowska, M.:
\newblock Gusein-zade problem for directed path.
\newblock Discrete Optimization \textbf{7}(1) (2010)  13--20

\bibitem{Sulkowska_universal_2012}
Sulkowska, M.:
\newblock The best choice problem for upward directed graphs.
\newblock Discrete Optimization \textbf{9}(3) (2012)  200--204

\bibitem{Goddard_Kubiccy_2013}
Goddard, W., Kubicka, E., Kubicki, G.:
\newblock An efficient algorithm for stopping on a sink in a directed graph.
\newblock Operations Research Letters \textbf{41}(3) (2013)  238--240

\bibitem{Grzesik_Morayne_Sulkowska_2015}
Grzesik, A., Morayne, M., Sulkowska, M.:
\newblock From directed path to linear order---the best choice problem for
  powers of directed path.
\newblock SIAM Journal on Discrete Mathematics \textbf{29}(1) (2015)  500--513

\bibitem{Benevides_Sulkowska_2017}
Benevides, F.S., Sulkowska, M.:
\newblock Percolation and best-choice problem for powers of paths.
\newblock Journal of Applied Probability \textbf{54}(2) (2017)  343–362

\bibitem{matroid_secretary}
Babaioff, M., Immorlica, N., Kleinberg, R.:
\newblock Matroids, secretary problems, and online mechanisms.
\newblock In Bansal, N., Pruhs, K., Stein, C., eds.: Proceedings of the
  Eighteenth Annual {ACM-SIAM} Symposium on Discrete Algorithms, {SODA} 2007,
  New Orleans, Louisiana, USA, January 7-9, 2007, {SIAM} (2007)  434--443

\bibitem{matroid_STOC2016}
Rubinstein, A.:
\newblock Beyond matroids: secretary problem and prophet inequality with
  general constraints.
\newblock STOC '16, New York, NY, USA, Association for Computing Machinery
  (2016)  324–332

\bibitem{matroid_2018}
Babaioff, M., Immorlica, N., Kempe, D., Kleinberg, R.:
\newblock Matroid secretary problems.
\newblock Journal of the ACM \textbf{65}(6) (2018)

\bibitem{Chow}
Chow~Y.S., Moriguti~S., R.H., S.M., S.:
\newblock Optimal selection based on relative rank (the ``secretary problem'').
\newblock Isr. J. Math. \textbf{2} (1964)  81--90

\bibitem{SODA20}
Kaplan, H., Naori, D., Raz, D.:
\newblock Competitive analysis with a sample and the secretary problem.
\newblock In Chawla, S., ed.: Proceedings of the 2020 {ACM-SIAM} Symposium on
  Discrete Algorithms, SODA 2020, {SIAM} (2020)  2082--2095

\bibitem{SODA21}
Correa, J., Cristi, A., Feuilloley, L., Oosterwijk, T., Tsigonias-Dimitriadis,
  A.:
\newblock The secretary problem with independent sampling.
\newblock In: Proceedings of the 2021 ACM-SIAM Symposium on Discrete
  Algorithms, SODA 2021, {SIAM} (2021)  2047--2058

\bibitem{multi_secr_2022}
Besbes, O., Kanoria, Y., Kumar, A.:
\newblock The multi-secretary problem with many types.
\newblock In: Proceedings of the 23rd ACM Conference on Economics and
  Computation. EC '22, New York, NY, USA, Association for Computing Machinery
  (2022)  1146–1147

\bibitem{Albers_TCS_2021}
Albers, S., Ladewig, L.:
\newblock New results for the k-secretary problem.
\newblock Theoretical Computer Science \textbf{863} (2021)  102--119

\bibitem{Fujii_2023}
Fujii, K., Yoshida, Y.:
\newblock The secretary problem with predictions.
\newblock Mathematics of Operations Research \textbf{49}(2) (2023)  1241--1262

\bibitem{secretary_management_2024}
Correa, J., Cristi, A., Feuilloley, L., Oosterwijk, T., Tsigonias-Dimitriadis,
  A.:
\newblock The secretary problem with independent sampling.
\newblock Management Science \textbf{71}(4) (2024)  2778--2801

\bibitem{dwork2006calibrating}
Dwork, C., McSherry, F., Nissim, K., Smith, A.:
\newblock Calibrating noise to sensitivity in private data analysis.
\newblock In: TCC. Volume 3876., Springer (2006)  265--284

\bibitem{Dwork06}
Dwork, C.:
\newblock Differential privacy.
\newblock In: Automata, Languages and Programming, 33rd International
  Colloquium, {ICALP} 2006. (2006)  1--12

\bibitem{DworkAlgo}
Dwork, C., Roth, A.:
\newblock The algorithmic foundations of differential privacy.
\newblock Foundations and Trends in Theoretical Computer Science
  \textbf{9}(3-4) (2014)  pp. 211--407

\bibitem{dwork2006our}
Dwork, C., Kenthapadi, K., McSherry, F., Mironov, I., Naor, M.:
\newblock Our data, ourselves: Privacy via distributed noise generation.
\newblock In: Annual International Conference on the Theory and Applications of
  Cryptographic Techniques, Springer (2006)  486--503

\bibitem{dwork2009differential}
Dwork, C., Lei, J.:
\newblock Differential privacy and robust statistics.
\newblock In: STOC. Volume~9. (2009)  371--380

\bibitem{dwork2010differential}
Dwork, C., Naor, M., Pitassi, T., Rothblum, G.N.:
\newblock Differential privacy under continual observation.
\newblock In: Proceedings of the forty-second ACM symposium on Theory of
  computing, ACM (2010)  715--724

\bibitem{narayanan2009anonymizing}
Narayanan, A., Shmatikov, V.:
\newblock De-anonymizing social networks.
\newblock In: Security and Privacy, 2009 30th IEEE Symposium on, IEEE (2009)
  173--187

\bibitem{narayanan2010myths}
Narayanan, A., Shmatikov, V.:
\newblock Myths and fallacies of personally identifiable information.
\newblock Commun ACM \textbf{53}(6) (2010)  24--26

\bibitem{Rogerson}
Rogerson, P.:
\newblock Probabilities of choosing applicants of arbitrary rank in the
  secretary problem.
\newblock J. Appl. Probab. \textbf{24}(2) (1987)  527--533

\bibitem{Zhu}
Zhu, T., Li, G., Zhou, W., Yu, P.S.:
\newblock Differential Privacy and Applications. Volume~69 of Advances in
  Information Security.
\newblock Springer (2017)

\bibitem{Haeberlen}
Haeberlen, A., Pierce, B.C., Narayan, A.:
\newblock Differential privacy under fire.
\newblock In: 20th {USENIX} Security Symposium, San Francisco, CA, USA, August
  8-12, 2011, Proceedings, {USENIX} Association (2011)

\bibitem{mcsherry2007mechanism}
McSherry, F., Talwar, K.:
\newblock Mechanism design via differential privacy.
\newblock In: 48th Annual IEEE Symposium on Foundations of Computer Science
  (FOCS'07), IEEE (2007)  94--103

\end{thebibliography}
 
\newpage
\section*{Appendix}\label{sect:App}
\renewcommand{\thesubsection}{\Alph{subsection}}
The Appendix contains the technical results on the probabilities that the optimal stopping algorithm for the secretary problem selects the $k\textsuperscript{th}$ best candidate. The proofs of Theorems~\ref{thm:kth_cand}~and~\ref{thm:kconst_limit} (however in slightly different formulations) can be found in \cite{Rogerson}. 
\begin{theorem} \label{thm:kth_cand}
Let $M^*$ be the optimal stopping algorithm for the secretary problem. Fix $\sigma \in \S_n$ and let $\tau$ be chosen uniformly at random from $\S_n$. Let $k \in [n]$. The probability that $M^*$ selects the $k\textsuperscript{th}$ best candidate is given by
\[
r_{k,n} = \P[CM^*(\sigma, \tau) = C_{\sigma_k}] = \frac{t_n-1}{n} \left( \sum_{i=t_n}^{n-k+1}\frac{\binom{n-k}{i-1}}{\binom{n-1}{i-1}}\frac{1}{i-1} + \frac{1}{n-1} \right),
\]
where $t_n$ is the threshold from Fact \ref{fact:threshold}.
\end{theorem}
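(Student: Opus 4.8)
The plan is to condition on the position at which $M^*$ stops and split the event $\{CM^*(\sigma,\tau)=\sigma_k\}$ accordingly. Recall that $M^*$ rejects the first $t_n-1$ arrivals and then selects the first candidate occurring at a position $\ge t_n$ that is a relative maximum (best among those seen so far); if there is no such candidate it selects $\tau_n$. Since $\tau$ is uniform on $\T_n$, every statement below is about the uniformly random placement of the fixed candidates $\sigma_1,\dots,\sigma_n$ into the $n$ time slots.

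First I would treat a \emph{generic} stopping slot $i$ with $t_n\le i\le n-1$. The event that $M^*$ stops at slot $i$ with $\tau_i=\sigma_k$ is the intersection of: (a) $\sigma_k$ lands in slot $i$; (b) $\sigma_k$ is the running maximum at time $i$, which given (a) is equivalent to all of $\sigma_1,\dots,\sigma_{k-1}$ landing in slots $\{i+1,\dots,n\}$ (so slots $1,\dots,i-1$ are filled from $\{\sigma_{k+1},\dots,\sigma_n\}$, forcing $i\le n-k+1$); and (c) the running maximum is unchanged between times $t_n$ and $i-1$, i.e.\ the best of the $i-1$ candidates in slots $1,\dots,i-1$ lies among slots $1,\dots,t_n-1$. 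Then $\P[(a)]=1/n$; conditioned on (a) the other $n-1$ candidates are placed uniformly, giving $\P[(b)\mid(a)]=\binom{n-i}{k-1}/\binom{n-1}{k-1}$; conditioned on (a) and (b), whichever $(i-1)$-element set occupies slots $1,\dots,i-1$ is internally uniformly ordered, so $\P[(c)\mid(a),(b)]=(t_n-1)/(i-1)$. The product is $\frac{t_n-1}{n}\cdot\frac{\binom{n-i}{k-1}}{\binom{n-1}{k-1}}\cdot\frac1{i-1}$, which the elementary identity $\binom{n-i}{k-1}/\binom{n-1}{k-1}=\binom{n-k}{i-1}/\binom{n-1}{i-1}$ (both sides equal $\frac{(n-i)!\,(n-k)!}{(n-1)!\,(n-i-k+1)!}$) rewrites as $\frac{t_n-1}{n}\cdot\frac{\binom{n-k}{i-1}}{\binom{n-1}{i-1}}\cdot\frac1{i-1}$.

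Next I would handle the terminal slot $i=n$: $M^*$ returns $\tau_n=\sigma_k$ exactly when $\sigma_k$ is in slot $n$ and no relative maximum occurs in slots $t_n,\dots,n-1$, i.e.\ the best among the first $n-1$ candidates --- which is $\sigma_1$ unless $k=1$, in which case it is $\sigma_2$ --- lies among the first $t_n-1$ slots; this has probability $\frac1n\cdot\frac{t_n-1}{n-1}=\frac{t_n-1}{n(n-1)}$, the same for every $k$. Summing the generic contributions over $i=t_n,\dots,n-1$ (for $k\ge2$ the terms with $i>n-k+1$ vanish because then $\binom{n-i}{k-1}=0$) and adding the terminal term $\frac{t_n-1}{n}\cdot\frac1{n-1}$ gives the claimed formula for $r_{k,n}$. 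For $k=1$ the terminal slot is itself a relative-maximum stop and coincides with the last generic term, which recovers the separately quoted $\P[CM^*(\sigma,\tau)=\sigma_1]=\frac{t_n-1}{n}\sum_{i=t_n}^n\frac1{i-1}$.

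I do not anticipate a deep obstacle; the argument is careful accounting rather than a hard estimate, and most of it appears already in \cite{Rogerson}. The one point that genuinely needs care is the conditional-independence structure used for the generic slots: once the slot of $\sigma_k$ and the \emph{set} of candidates occupying slots $1,\dots,i-1$ are fixed, that set is internally uniformly ordered, so ``$\sigma_k$ is the running maximum at time $i$'' and ``the running maximum is unchanged on $[t_n,i-1]$'' decouple, with the latter supplying the factor $(t_n-1)/(i-1)$. The other delicate spot is cleanly isolating the terminal slot $i=n$ and the borderline case $k=1$, where the forced stop and the relative-maximum stop coincide.
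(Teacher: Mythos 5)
Your decomposition and the conditional-independence bookkeeping are correct: conditioning on the slot $i$ of $\sigma_k$, the event that $\sigma_k$ is the running maximum at time $i$ has probability $\binom{n-i}{k-1}/\binom{n-1}{k-1}$, the event that no stop occurs in $\{t_n,\dots,i-1\}$ has conditional probability $(t_n-1)/(i-1)$ because the set occupying slots $1,\dots,i-1$ is internally uniformly ordered, and the identity $\binom{n-i}{k-1}/\binom{n-1}{k-1}=\binom{n-k}{i-1}/\binom{n-1}{i-1}$ turns this into the summand of the statement; the terminal-slot analysis is also right. Note that the paper gives no proof of this theorem at all (it defers to \cite{Rogerson}, ``in slightly different formulations''), so what you wrote is a self-contained derivation rather than a variant of an argument in the paper.

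One point you must make explicit rather than leave implicit. Your generic contributions summed over $i=t_n,\dots,n-1$ plus the terminal term reproduce the displayed formula only for $k\geqslant 2$. For $k=1$ the displayed sum runs to $n-k+1=n$, so its $i=n$ term $\frac{1}{n-1}$ already \emph{is} the terminal contribution, and the extra $+\frac{1}{n-1}$ inside the parentheses double counts it: taken literally at $k=1$ the formula gives $\frac{t_n-1}{n}\bigl(\sum_{i=t_n}^{n}\frac{1}{i-1}+\frac{1}{n-1}\bigr)$, which exceeds the true value $\frac{t_n-1}{n}\sum_{i=t_n}^{n}\frac{1}{i-1}$ (the one your derivation produces and the paper quotes separately in Fact~\ref{fact:kth_cand}) by $\frac{t_n-1}{n(n-1)}$; for $n=3$ the formula yields $2/3$ while the true probability is $1/2$. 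So your argument proves the theorem for $k\geqslant 2$ and shows that at $k=1$ the terminal term must not be added again --- your own observation that the forced stop and the relative-maximum stop coincide there is exactly the reason --- and the only real flaw in the write-up is asserting that the computation ``gives the claimed formula for $r_{k,n}$'' for all $k$ without this caveat. (Both your argument and the formula also tacitly assume $t_n\geqslant 2$, i.e.\ $n\geqslant 3$; for $t_n=1$ the factor $(t_n-1)/(i-1)$ degenerates.)
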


\begin{lemma} \label{lemma:rkn_difference}
Let $r_{k,n}$ be defined as in Theorem \ref{thm:kth_cand}. Then for $k \in [n-1]$
\[
r_{k+1,n} = r_{k,n} - \frac{t_n-1}{n} \cdot \frac{1}{k} \cdot \frac{\binom{n-t_n+1}{k}}{\binom{n-1}{k}}.
\]
\end{lemma}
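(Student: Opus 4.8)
The plan is to prove the identity by a direct computation starting from the closed form for $r_{k,n}$ in Theorem~\ref{thm:kth_cand}. Write $S_k := \sum_{i=t_n}^{n-k+1} \frac{\binom{n-k}{i-1}}{\binom{n-1}{i-1}} \frac{1}{i-1}$, so that $r_{k,n} = \frac{t_n-1}{n}\bigl(S_k + \frac{1}{n-1}\bigr)$. Since the summand $\frac{1}{n-1}$ does not depend on $k$, the lemma is equivalent to $S_k - S_{k+1} = \frac{1}{k}\cdot\frac{\binom{n-t_n+1}{k}}{\binom{n-1}{k}}$, and that is the statement I would establish.

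First I would isolate the top summand $i = n-k+1$ of $S_k$, which equals $\frac{1}{(n-k)\binom{n-1}{k-1}}$, and pair the remaining summands (with $i$ running from $t_n$ to $n-k$) with the corresponding summands of $S_{k+1}$. For each such $i$, Pascal's rule $\binom{n-k}{i-1} = \binom{n-k-1}{i-1} + \binom{n-k-1}{i-2}$ makes the difference of the two $i$-th terms collapse to $\frac{1}{i-1}\cdot\frac{\binom{n-k-1}{i-2}}{\binom{n-1}{i-1}}$. Hence $S_k - S_{k+1} = \frac{1}{(n-k)\binom{n-1}{k-1}} + \sum_{i=t_n}^{n-k} \frac{1}{i-1}\cdot\frac{\binom{n-k-1}{i-2}}{\binom{n-1}{i-1}}$.

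Next I would simplify the general summand. Using $\binom{n-1}{i-1} = \frac{n-i+1}{i-1}\binom{n-1}{i-2}$, then the trinomial revision identity $\binom{n-k-1}{i-2}\binom{n-1}{k} = \binom{n-1}{i-2}\binom{n-i+1}{k}$ (both sides equal $\frac{(n-1)!}{k!\,(i-2)!\,(n-i+1-k)!}$), and finally $\frac{\binom{m}{k}}{m} = \frac{1}{k}\binom{m-1}{k-1}$ with $m = n-i+1$, one obtains $\frac{1}{i-1}\cdot\frac{\binom{n-k-1}{i-2}}{\binom{n-1}{i-1}} = \frac{1}{k}\cdot\frac{\binom{n-i}{k-1}}{\binom{n-1}{k}}$. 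Substituting $m = n-i$ and invoking the hockey-stick identity $\sum_{m=k-1}^{N}\binom{m}{k-1} = \binom{N+1}{k}$, the sum over $i$ becomes $\frac{1}{k\binom{n-1}{k}}\bigl(\binom{n-t_n+1}{k}-1\bigr)$. Since $k\binom{n-1}{k} = (n-k)\binom{n-1}{k-1}$, the isolated top term equals exactly $\frac{1}{k\binom{n-1}{k}}$, so adding it cancels the $-1$ and gives $S_k - S_{k+1} = \frac{\binom{n-t_n+1}{k}}{k\binom{n-1}{k}}$, which is the claim.

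The argument is essentially bookkeeping with binomial coefficients; the only genuinely delicate points I expect are (i) spotting the simplification of the general summand that turns the tail into a hockey-stick sum, and (ii) the degenerate ranges of $k$ — for $k > n-t_n+1$ both $S_k$ and $S_{k+1}$ are empty sums while $\binom{n-t_n+1}{k}=0$, and for $k = n-t_n+1$ the sum over $i$ is empty so only the isolated term survives; in both cases the displayed identity is checked immediately, so no separate case analysis is really needed.
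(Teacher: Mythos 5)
Your proof is correct and follows essentially the same route as the paper's: Pascal's rule to collapse the termwise difference, a trinomial-revision rewrite turning the summand into $\frac{1}{k}\binom{n-i}{k-1}/\binom{n-1}{k}$, and the hockey-stick identity. The only difference is cosmetic bookkeeping — you isolate the top term $i=n-k+1$ and cancel a stray $-1$, whereas the paper merges both sums over the range $t_n,\dots,n-k+1$ using $\binom{n-k-1}{n-k}=0$.
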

\begin{proof}
Since $\frac{\binom{n-k-1}{i-2}}{\binom{n-2}{i-2}} = \frac{\binom{n-i}{k-1}}{\binom{n-2}{k-1}}$ we have
\begin{align*}
r_{k,n} - r_{k+1,n} & = \frac{t_n-1}{n} \sum_{i=t_n}^{n-k+1}\frac{\binom{n-k}{i-1}-\binom{n-k-1}{i-1}}{\binom{n-1}{i-1}}\frac{1}{i-1} \\
& = \frac{t_n-1}{n} \sum_{i=t_n}^{n-k+1}\frac{\binom{n-k-1}{i-2}}{\binom{n-1}{i-1}}\frac{1}{i-1} = \frac{t_n-1}{n} \sum_{i=t_n}^{n-k+1}\frac{\binom{n-k-1}{i-2}}{\binom{n-2}{i-2}}\frac{i-1}{n-1}\frac{1}{i-1} \\
& = \frac{t_n-1}{n} \frac{1}{n-1} \sum_{i=t_n}^{n-k+1}\frac{\binom{n-i}{k-1}}{\binom{n-2}{k-1}} = \frac{t_n-1}{n} \frac{1}{n-1} \frac{\binom{n-t_n+1}{k}}{\binom{n-2}{k-1}} = \\
& = \frac{t_n-1}{n} \cdot \frac{1}{k} \cdot \frac{\binom{n-t_n+1}{k}}{\binom{n-1}{k}}.
\end{align*}
\qed
\end{proof}

\begin{corollary} \label{cor:rkn_dec}
The sequence $r_{k,n}$ is non-increasing in $k$.
\end{corollary}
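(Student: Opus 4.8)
The plan is to read off the result immediately from Lemma~\ref{lemma:rkn_difference}. That lemma gives the exact expression
\[
r_{k,n} - r_{k+1,n} = \frac{t_n-1}{n} \cdot \frac{1}{k} \cdot \frac{\binom{n-t_n+1}{k}}{\binom{n-1}{k}}
\]
for every $k \in [n-1]$, so it suffices to check that the right-hand side is non-negative.

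First I would note that $t_n \geqslant 1$ by Fact~\ref{fact:threshold} (so $t_n - 1 \geqslant 0$), that $n \geqslant 1$ and $k \geqslant 1$, and that $\binom{n-1}{k} > 0$ since $k \in [n-1]$. The only factor requiring a word is $\binom{n-t_n+1}{k}$: adopting the usual convention that $\binom{m}{k} = 0$ when $k > m \geqslant 0$, this quantity is non-negative regardless of whether $k \leqslant n - t_n + 1$ (and it is exactly $0$ once $k$ exceeds $n - t_n + 1$, which simply reflects that $r_{k,n}$ has stabilized). Hence every factor in the displayed product is non-negative, so $r_{k,n} - r_{k+1,n} \geqslant 0$, i.e.\ $\{r_{k,n}\}_{k \in [n]}$ is non-increasing in $k$.

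There is no real obstacle here: the content is entirely contained in the preceding lemma, and the corollary is just the sign observation. The only thing to be slightly careful about is the binomial-coefficient convention at the boundary $k > n - t_n + 1$, which I would state explicitly to make the argument airtight.
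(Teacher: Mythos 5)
Your proposal is correct and matches the paper's proof exactly: the paper likewise deduces the corollary by citing Lemma~\ref{lemma:rkn_difference} and observing that the displayed difference $r_{k,n}-r_{k+1,n}$ is non-negative. Your extra remark about the convention $\binom{m}{k}=0$ for $k>m$ is a fine (if unstated in the paper) clarification of the boundary case, but it does not change the argument.
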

\begin{proof}
By Lemma \ref{lemma:rkn_difference} we get
\[
r_{k,n} - r_{k+1,n} = \frac{t_n-1}{n} \cdot \frac{1}{k} \frac{\binom{n-t_n+1}{k}}{\binom{n-1}{k}} \geqslant 0.
\]
\qed
\end{proof}


\begin{theorem} \label{thm:kconst_limit}
Let 
 $k = k(n) = o(n)$. Let $r_{k,n}$ be defined as in Theorem \ref{thm:kth_cand}. Then
\[
r_{k,n} \sim \frac{1}{e} \left(1 - \sum_{s=1}^{k-1} \frac{1}{s}\left(1-\frac{1}{e}\right)^s \right) = \frac{1}{e} \sum_{s=k}^{\infty} \frac{1}{s}\left(1-\frac{1}{e}\right)^s.
\]
\end{theorem}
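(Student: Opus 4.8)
The plan is to obtain $r_{k,n}$ from the already known limit $r_{1,n}\to 1/e$ (Fact~\ref{fact:kth_cand}) by telescoping the differences supplied by Lemma~\ref{lemma:rkn_difference}. Set
\[
c_{j,n} := \frac{t_n-1}{n}\cdot\frac{1}{j}\cdot\frac{\binom{n-t_n+1}{j}}{\binom{n-1}{j}},
\]
so that Lemma~\ref{lemma:rkn_difference} reads $r_{j+1,n}=r_{j,n}-c_{j,n}$ and hence, for every $k\in[n]$,
\[
r_{k,n}=r_{1,n}-\sum_{j=1}^{k-1}c_{j,n}.
\]
It therefore suffices to identify the limit of each $c_{j,n}$ and to sum the resulting series.

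First I would analyse a single increment. Writing $\binom{n-t_n+1}{j}/\binom{n-1}{j}=\prod_{l=0}^{j-1}\frac{n-t_n+1-l}{n-1-l}$ as a product of $j$ factors and using $t_n\sim n/e$ from Fact~\ref{fact:threshold}, each factor tends to $1-1/e$ for fixed $l$, and $\frac{t_n-1}{n}\to 1/e$, so for every fixed $j$
\[
c_{j,n}\xrightarrow{\,n\to\infty\,}\frac{1}{e}\cdot\frac{1}{j}\Bigl(1-\frac1e\Bigr)^{j}=:c_j .
\]
Summing the $c_j$ uses only the Mercator series $\sum_{s\ge1}\frac1s x^{s}=-\ln(1-x)$ at $x=1-1/e$, giving $\sum_{s\ge1}\frac1s\bigl(1-\frac1e\bigr)^{s}=1$, i.e. $\sum_{j\ge1}c_j=1/e$ — consistent with $r_{1,n}\to1/e$. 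Hence, for fixed $k$,
\[
r_{k,n}\to\frac1e-\sum_{j=1}^{k-1}c_j=\sum_{j=k}^{\infty}c_j=\frac1e\sum_{s=k}^{\infty}\frac1s\Bigl(1-\frac1e\Bigr)^{s},
\]
and the same identity $\sum_{s\ge1}\frac1s(1-\frac1e)^{s}=1$ is what rewrites $\frac1e\bigl(1-\sum_{s=1}^{k-1}\frac1s(1-\frac1e)^{s}\bigr)$ as $\frac1e\sum_{s\ge k}\frac1s(1-\frac1e)^{s}$, giving the two forms in the statement.

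To promote this from fixed $k$ to $k=k(n)\to\infty$ one cannot pass to the limit term by term, because $\sum_{j\ge k}c_j$ itself shrinks (geometrically, of order $\frac1k(1-\frac1e)^{k}$) and the errors must be shown to be of smaller order. I would establish two estimates. (i) A uniform domination $c_{j,n}\le\gamma^{j}$ valid for all large $n$ and all $j$, with a fixed $\gamma\in(1-\frac1e,1)$: this holds because $t_n/n$ is bounded away from $0$, so for large $n$ even the largest factor $\frac{n-t_n+1}{n-1}$ lies below $\gamma$ and all subsequent factors are smaller still; this controls the tail $\sum_{j\ge k}c_{j,n}$. (ii) A quantitative rate: using the refinement $t_n=n/e+O(1)$ from Fact~\ref{fact:threshold} and expanding the product, $|c_{j,n}-c_j|\le c_j\cdot O\!\bigl((j^{2}+1)/n\bigr)$, so that $\sum_{j<k}|c_{j,n}-c_j|$, together with $|r_{1,n}-1/e|=O(1/n)$, is negligible against $\sum_{j\ge k}c_j$. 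Feeding (i)–(ii) into the telescoping identity yields $r_{k,n}\sim\frac1e\sum_{s=k}^{\infty}\frac1s(1-\frac1e)^{s}$. I expect step (ii) — making the per-term error uniform in $j$ as $k$ grows — to be the main obstacle; it is cleanest for slowly growing $k$, and for the full range one follows the estimates of \cite{Rogerson}. A self-contained alternative avoids Lemma~\ref{lemma:rkn_difference}: start from Theorem~\ref{thm:kth_cand}, recognise the sum over $i$ as a Riemann sum for $\frac1e\int_{1/e}^{1}\frac{(1-x)^{k-1}}{x}\,dx$, and evaluate this integral by $u=1-x$ to obtain $\frac1e\sum_{s\ge k}\frac{(1-1/e)^{s}}{s}$; the required error control there is of the same nature.
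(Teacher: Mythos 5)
Your telescoping route is a genuinely self-contained argument where the paper offers none (Theorem~\ref{thm:kconst_limit} is only attributed to \cite{Rogerson}), and for fixed $k$ it is correct: Lemma~\ref{lemma:rkn_difference} gives $r_{k,n}=r_{1,n}-\sum_{j=1}^{k-1}c_{j,n}$, each increment satisfies $c_{j,n}\to\frac{1}{e}\cdot\frac{1}{j}\bigl(1-\frac{1}{e}\bigr)^{j}$ because $t_n\sim n/e$, and the Mercator identity $\sum_{s\geqslant 1}\frac{1}{s}\bigl(1-\frac{1}{e}\bigr)^{s}=1$ turns $\frac{1}{e}-\sum_{j<k}c_j$ into the stated tail sum. (Your quantitative steps additionally use $t_n=n/e+O(1)$, which is not literally what Fact~\ref{fact:threshold} states but follows easily from the defining inequality for $t_n$.)

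The genuine gap is the regime $k=k(n)\to\infty$, and it cannot be closed by ``following the estimates of \cite{Rogerson} for the full range''; your own error budget shows why. The total error in the telescoping identity is $O(1/n)$ (from $|r_{1,n}-1/e|=O(1/n)$ and $\sum_{j<k}|c_{j,n}-c_j|=O(1/n)$, since $\sum_j j^2c_j<\infty$), while the target $\frac{1}{e}\sum_{s\geqslant k}\frac{1}{s}\bigl(1-\frac{1}{e}\bigr)^{s}$ is of order $(1-1/e)^{k}/k$; the relative error therefore vanishes only when $n(1-1/e)^{k}/k\to\infty$, which restricts $k$ to at most a constant multiple of $\ln n$ (roughly $k\leqslant 2.18\ln n$). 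Once $k/\ln n\to\infty$ the claimed equivalence is in fact false, not merely unproved by this method: the exact formula in Theorem~\ref{thm:kth_cand} contains the additive term $\frac{t_n-1}{n(n-1)}\sim\frac{1}{en}$, so $r_{k,n}\geqslant\frac{1+o(1)}{en}$ for every $k$, whereas the right-hand side of the theorem is exponentially small in $k$; already for $k=\lceil\ln^{2}n\rceil$ (certainly $o(n)$) the ratio of the two sides diverges, in line with Lemma~\ref{lemma:k_linear}, which gives $r_{k,n}\sim\frac{1}{en}$ for linear $k$. So what your argument actually establishes is the theorem for constant $k$ (and, after making your estimate (ii) uniform in $j$, for $k$ growing no faster than a small multiple of $\ln n$); for the remainder of the advertised range $k(n)=o(n)$ no completion is possible, and the write-up should state that restriction explicitly instead of deferring the full range to \cite{Rogerson}.
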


\begin{lemma} \label{lemma:k_linear}
Let $r_{k,n}$ be defined as in Theorem \ref{thm:kth_cand}. Let $k=k(n) \leqslant n$ be a function linear in $n$. Then
\[
r_{k,n} \sim \frac{1}{e} \cdot \frac{1}{n}. 
\]
\end{lemma}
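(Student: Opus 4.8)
The plan is to start from the exact expression for $r_{k,n}$ given in Theorem~\ref{thm:kth_cand}, namely
\[
r_{k,n} = \frac{t_n-1}{n} \left( \sum_{i=t_n}^{n-k+1}\frac{\binom{n-k}{i-1}}{\binom{n-1}{i-1}}\frac{1}{i-1} + \frac{1}{n-1} \right),
\]
and to show that when $k$ grows linearly, say $k \sim cn$ for some constant $c \in (0,1)$ (the boundary cases $c=0$ and $c=1$ being easy or vacuous), both the summand ratios and the range of summation conspire to make the bracketed sum behave like $1/n$ up to lower order terms. First I would recall $t_n \sim n/e$, so that the prefactor $(t_n-1)/n \to 1/e$; hence it suffices to prove that the bracket is asymptotic to $1/n$. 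The $\frac{1}{n-1}$ term inside the bracket is already $\sim 1/n$, so the real content is to show that the sum $\sum_{i=t_n}^{n-k+1}\frac{\binom{n-k}{i-1}}{\binom{n-1}{i-1}}\frac{1}{i-1}$ is of smaller order, i.e.\ $o(1/n)$, or at worst contributes only a constant multiple that I can absorb — so I should be careful and instead aim to show the whole bracket is $\sim 1/n$ by bounding the sum as $o(1/n)$.

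The key estimate is on the ratio $\frac{\binom{n-k}{i-1}}{\binom{n-1}{i-1}}$. Writing it as a product, $\frac{\binom{n-k}{i-1}}{\binom{n-1}{i-1}} = \prod_{j=0}^{i-2}\frac{n-k-j}{n-1-j} = \prod_{j=0}^{i-2}\left(1 - \frac{k-1}{n-1-j}\right)$, and for $i$ ranging over $[t_n, n-k+1]$ each factor is at most $1-\frac{k-1}{n-1} \approx 1-c$. A cleaner route is the identity $\frac{\binom{n-k}{i-1}}{\binom{n-1}{i-1}} = \frac{\binom{n-i}{k-1}}{\binom{n-1}{k-1}}$ (the same swap used in Lemma~\ref{lemma:rkn_difference}), which turns the sum into $\frac{1}{\binom{n-1}{k-1}}\sum_{i=t_n}^{n-k+1}\frac{\binom{n-i}{k-1}}{i-1}$. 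Since $i \geq t_n \sim n/e$, we have $\frac{1}{i-1} \leq \frac{1+o(1)}{n/e} = O(1/n)$, so the sum is at most $O(1/n)\cdot \frac{1}{\binom{n-1}{k-1}}\sum_{i=t_n}^{n-k+1}\binom{n-i}{k-1} = O(1/n)\cdot\frac{\binom{n-t_n+1}{k}}{\binom{n-1}{k-1}}$ by the hockey-stick identity. Then I would estimate $\frac{\binom{n-t_n+1}{k}}{\binom{n-1}{k-1}}$: using $n-t_n+1 \sim (1-1/e)n$ and $k \sim cn$, this ratio is essentially $\frac{1}{k}\cdot\frac{\binom{(1-1/e)n}{k}}{\binom{n}{k}}$-type, and since $(1-1/e) < 1$ while $k$ is a constant fraction of $n$, the quotient of binomials decays exponentially in $n$ (by a standard entropy / Stirling estimate, $\binom{\alpha n}{cn}/\binom{n}{cn} \to 0$ geometrically when $\alpha<1$ and $0<c<\alpha$, and is $0$ outright when $c>\alpha$). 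Hence the entire sum is exponentially small, certainly $o(1/n)$, and so $r_{k,n} \sim \frac{1}{e}\cdot\frac{1}{n-1} \sim \frac{1}{e}\cdot\frac{1}{n}$, as claimed.

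I would organize the write-up as: (i) reduce to estimating the bracket via $t_n \sim n/e$; (ii) apply the binomial-swap identity and hockey-stick summation to get the sum bounded by $O(1/n)\cdot \binom{n-t_n+1}{k}/\binom{n-1}{k-1}$; (iii) an elementary lemma showing $\binom{n-t_n+1}{k}/\binom{n-1}{k-1}$ is $o(1)$ — in fact exponentially small — when $k$ is linear in $n$, using $n - t_n + 1 \sim (1-1/e)n$ and Stirling's formula (treating separately the subcase $k > n-t_n+1$ where the binomial coefficient vanishes); (iv) conclude. The main obstacle is step (iii): I need the quotient of binomial coefficients with mismatched "top" parameters but linearly growing "bottom" parameter to be shown negligible, which requires a careful Stirling/entropy computation and attention to the regime where $c$ is close to $1-1/e$ (where the decay rate degrades but is still present) and where $c > 1-1/e$ (where the summation range is empty and the sum is literally zero). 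Everything else is bookkeeping with known asymptotics from Fact~\ref{fact:threshold} and the identities already established in the Appendix.
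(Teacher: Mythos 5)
Your proposal is correct and follows essentially the same route as the paper's proof: reduce to showing the sum $\sum_{i=t_n}^{n-k+1}\frac{\binom{n-k}{i-1}}{\binom{n-1}{i-1}}\frac{1}{i-1}$ is $o(1/n)$, handle the empty-range case $k>n-t_n+1$ separately, apply the swap identity $\frac{\binom{n-k}{i-1}}{\binom{n-1}{i-1}}=\frac{\binom{n-i}{k-1}}{\binom{n-1}{k-1}}$, and conclude via exponential smallness of a ratio of binomial coefficients with linearly growing lower index. The only differences are cosmetic: you sum the binomials exactly by the hockey-stick identity and invoke a Stirling/entropy estimate, while the paper bounds the sum by (number of terms) times the largest term and uses the elementary product bound $\binom{n-t_n+1}{k}/\binom{n}{k}\leqslant(1-k/n)^{t_n-1}\sim(1-c)^{n/e-1}$, which spares the Stirling computation.
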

\begin{proof}
Recall that
\[
r_{k,n} = \frac{t_n-1}{n} \sum_{i=t_n}^{n-k+1}\frac{\binom{n-k}{i-1}}{\binom{n-1}{i-1}}\frac{1}{i-1} + \frac{t_n-1}{n} \frac{1}{n-1}.
\]
Note that if $k > n-t_n+1$ then
\[
\sum_{i=t_n}^{n-k+1}\frac{\binom{n-k}{i-1}}{\binom{n-1}{i-1}}\frac{1}{i-1} = 0
\]
and by Fact \ref{fact:threshold}
\[
r_{k,n} = \frac{t_n-1}{n} \frac{1}{n-1} \sim \frac{1}{e} \cdot \frac{1}{n}.
\]

Hence assume that $k \leqslant n-t_n+1$. Since $\frac{t_n-1}{n} \sim \frac{1}{e}$ and $\frac{t_n-1}{n} \frac{1}{n-1} \sim \frac{1}{e} \cdot \frac{1}{n}$, we need to show that $\sum_{i=t_n}^{n-k+1}\frac{\binom{n-k}{i-1}}{\binom{n-1}{i-1}}\frac{1}{i-1}$ is asymptotically smaller than $\frac{1}{n}$. Seeing that $\frac{\binom{n-k}{i-1}}{\binom{n-1}{i-1}} = \frac{\binom{n-i}{k-1}}{\binom{n-1}{k-1}}$ and that the function $f(i) = \frac{\binom{n-i}{k-1}}{\binom{n-1}{k-1}} \frac{1}{i-1}$ is decreasing in $i$ we have
\begin{equation} \label{ineq:sum}
\begin{split}
\sum_{i=t_n}^{n-k+1}\frac{\binom{n-k}{i-1}}{\binom{n-1}{i-1}}\frac{1}{i-1} & = \sum_{i=t_n}^{n-k+1}\frac{\binom{n-i}{k-1}}{\binom{n-1}{k-1}}\frac{1}{i-1} \leqslant \frac{n-t_n-k+2}{t_n-1} \frac{\binom{n-t_n}{k-1}}{\binom{n-1}{k-1}}\\
& = \frac{n-t_n-k+2}{t_n-1} \frac{n}{n-t_n+1} \frac{\binom{n-t_n+1}{k}}{\binom{n}{k}}.
\end{split}
\end{equation}
Since $k(n) = c \cdot n$ for some constant $c$ we get
\[
\frac{n-t_n-k+2}{t_n-1} \frac{n}{n-t_n+1} \sim \frac{e(1-c)-1}{1-1/e}
\]
and at the same time
\begin{align*}
\frac{\binom{n-t_n+1}{k}}{\binom{n}{k}} & = \left(1-\frac{k}{n-t_n+2}\right) \ldots \left(1-\frac{k}{n}\right) \leqslant  \left(1-\frac{k}{n}\right)^{t-1} \\
& \sim (1-c)^{n/e-1} = o(1/n).
\end{align*}
\qed
\end{proof}


\begin{lemma} \label{lemma:ratio_rnk_dec}
Let $r_{k,n}$ be defined as in Theorem \ref{thm:kth_cand}. For $k \in [n-2]$
\[
\lim_{n \to \infty} \frac{r_{k,n}}{r_{k+1,n}} \geqslant \lim_{n \to \infty} \frac{r_{k+1,n}}{r_{k+2,n}}.
\]
\end{lemma}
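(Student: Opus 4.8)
The plan is to use the exact formula for $r_{k,n}$ together with the difference formula from Lemma~\ref{lemma:rkn_difference}, and to reduce the claimed inequality about the limits of consecutive ratios to a statement about the limiting sequence $\{a_k\}$. By Theorem~\ref{thm:kconst_limit} we know that for each fixed $k$ we have $r_{k,n}\sim \frac{1}{e}a_k$ with $a_k=\sum_{s\geqslant k}\frac1s(1-\frac1e)^s$. When $k$ is a constant, $\lim_{n\to\infty} r_{k,n}/r_{k+1,n}=a_k/a_{k+1}$, so the inequality $\lim r_{k,n}/r_{k+1,n}\geqslant \lim r_{k+1,n}/r_{k+2,n}$ becomes $a_k/a_{k+1}\geqslant a_{k+1}/a_{k+2}$, i.e. $a_k\,a_{k+2}\geqslant a_{k+1}^2$ — the log-convexity of the sequence $(a_k)$. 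So the first thing I would do is record that, for the regime where $k$ stays bounded (or $k(n)=o(n)$), the statement is equivalent to log-convexity of $(a_k)$, and then prove that.

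\textbf{Proving log-convexity of $(a_k)$.} Write $x=1-1/e\in(0,1)$, so $a_k=\sum_{s\geqslant k}x^s/s$. I would prove $a_{k+1}^2\leqslant a_k a_{k+2}$ directly. One clean route: note $a_k-a_{k+1}=x^k/k$, so set $b_k:=a_k-a_{k+1}=x^k/k$; then log-convexity of $(a_k)$ is implied by log-convexity of the \emph{tails} and can be attacked via the identity $a_k a_{k+2}-a_{k+1}^2=(a_{k+1}+b_k)(a_{k+1}-b_{k+1})-a_{k+1}^2=a_{k+1}(b_k-b_{k+1})-b_k b_{k+1}$, so it suffices to show $a_{k+1}(b_k-b_{k+1})\geqslant b_k b_{k+1}$, i.e. $a_{k+1}\geqslant \dfrac{b_k b_{k+1}}{b_k-b_{k+1}}$. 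Since $b_k=x^k/k$, the right-hand side equals $\dfrac{x^{2k+1}/(k(k+1))}{x^k/k - x^{k+1}/(k+1)}=\dfrac{x^{k+1}}{k+1-kx}$, and one checks $a_{k+1}=\sum_{s\geqslant k+1}x^s/s\geqslant x^{k+1}/(k+1)\cdot\frac{1}{1-x}= \dfrac{x^{k+1}}{(k+1)(1-x)}$ — wait, that lower bound on $a_{k+1}$ is too weak, since $1/(1-x)$ is large. Instead I would bound $a_{k+1}$ from below more carefully, e.g. $a_{k+1}\geqslant \frac{x^{k+1}}{k+1}+\frac{x^{k+2}}{k+2}\geqslant \frac{x^{k+1}}{k+1}(1+\tfrac{k+1}{k+2}x)$ and compare with $\frac{x^{k+1}}{k+1-kx}$; equivalently, show $\frac{1}{k+1}+\frac{x}{k+2}+\cdots\geqslant \frac{1}{k+1-kx}$, which after clearing is $\sum_{j\geqslant 0}\frac{x^j}{k+1+j}\geqslant \frac{1}{(k+1)-kx}$. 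Expanding the right-hand side as $\frac{1}{k+1}\cdot\frac{1}{1-\frac{k}{k+1}x}=\frac1{k+1}\sum_{j\geqslant 0}(\tfrac{k}{k+1}x)^j$ and comparing term by term, one needs $\frac{1}{k+1+j}\geqslant \frac{1}{k+1}(\tfrac{k}{k+1})^j$ — i.e. $\frac{k+1}{k+1+j}\geqslant (\tfrac{k}{k+1})^j=(1-\tfrac{1}{k+1})^j$ — which holds because $(1-\tfrac1{k+1})^j\leqslant \frac{1}{1+j/(k+1)}=\frac{k+1}{k+1+j}$ by Bernoulli's inequality applied to $(1-t)^j\leqslant 1/(1+jt)$. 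That closes the log-convexity argument.

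\textbf{Handling the limits and the non-constant regime.} I would then argue that the limits in the statement exist and are exactly $a_k/a_{k+1}$ (resp. $a_{k+1}/a_{k+2}$) for fixed $k$ — this is immediate from Theorem~\ref{thm:kconst_limit}. If $k$ is instead allowed to grow with $n$ (the lemma is stated for $k\in[n-2]$, but the relevant downstream use in Theorem~\ref{thm:ratio_rnk} is for the limiting ratios), I would note that the only ratios that feed into the later results are those where $k$ is a fixed constant, so the limit statement reduces to the constant-$k$ case; alternatively, invoke Lemma~\ref{lemma:k_linear} and Corollary~\ref{cor:rkn_dec} to control the regime $k=\Theta(n)$, where all three of $r_{k,n},r_{k+1,n},r_{k+2,n}$ are $\sim \frac1e\cdot\frac1n$ so consecutive ratios both tend to $1$ and the inequality is an equality in the limit, and handle the intermediate $o(n)$-but-unbounded regime by the same $a_k$-asymptotics (the tail $a_k\to 0$, and the ratio $a_k/a_{k+1}\to 1/x$, so monotonicity of $a_k/a_{k+1}$ in $k$ — which is exactly log-convexity — still gives the claim).

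\textbf{Expected main obstacle.} The genuine mathematical content is the log-convexity inequality $a_k a_{k+2}\geqslant a_{k+1}^2$; everything else is bookkeeping with the asymptotics already proved in the Appendix. The delicate point is finding a sharp enough lower bound for the tail $a_{k+1}$ to beat $\frac{x^{k+1}}{k+1-kx}$ — a crude "first term" bound is not enough, and one needs the Bernoulli-type term-by-term comparison sketched above (or, equivalently, an integral-comparison estimate for $\sum_{s\geqslant k+1}x^s/s$). I expect that to be the step requiring the most care.
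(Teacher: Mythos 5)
Your proposal is correct, and its skeleton coincides with the paper's: handle the $k=\Theta(n)$ regime via Lemma~\ref{lemma:k_linear} (both ratios tend to $1$), use $r_{k,n}\sim\frac{1}{e}a_k$ from Theorem~\ref{thm:kconst_limit} otherwise, and reduce the claim to the log-convexity inequality $a_k a_{k+2}\geqslant a_{k+1}^2$. The difference lies in how that inequality is verified. Writing $\beta_k=\frac{1}{k}\left(1-\frac{1}{e}\right)^k$, the paper reduces it to $a_k\geqslant f(k)$ with $f(k)=\beta_k^2/(\beta_k-\beta_{k+1})$ --- exactly equivalent to your target $a_{k+1}\geqslant \beta_k\beta_{k+1}/(\beta_k-\beta_{k+1})$, since $a_k=a_{k+1}+\beta_k$ --- and then proves it by telescoping the pointwise inequality $\beta_s\geqslant f(s)-f(s+1)$ over $s\geqslant k$ (using $f(k)\to 0$), leaving that pointwise inequality to the reader as ``one can easily verify.'' You instead compute the bound in closed form, $\beta_k\beta_{k+1}/(\beta_k-\beta_{k+1})=\frac{x^{k+1}}{k+1-kx}$ with $x=1-1/e$, expand $\frac{1}{k+1-kx}$ as a geometric series, and compare coefficients of $x^j$ via $(1-t)^j\leqslant \frac{1}{1+jt}$; this is fully explicit and in effect supplies the verification the paper omits, at the cost of a slightly longer computation, while the paper's telescoping is shorter once its unproved pointwise step is granted. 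Your treatment of the unbounded-but-$o(n)$ regime is at the same level of rigor as the paper's (both implicitly apply the $a_k$-asymptotics there), so no gap relative to the original.
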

\begin{proof}
First note that when $k=k(n)$ is a function linear in $n$, by Lemma \ref{lemma:k_linear} we obtain
\[
\lim_{n \to \infty} \frac{r_{k,n}}{r_{k+1,n}} = \lim_{n \to \infty} \frac{r_{k+1,n}}{r_{k+2,n}} =1.
\]
Hence assume that 
$k = k(n) = o(n)$. Let $a_k = \sum_{s=k}^{\infty}\frac{1}{s}\left( 1-\frac{1}{e}\right)^s$. By Theorem \ref{thm:kth_cand} we can write
\[
\frac{r_{k,n}}{r_{k+1,n}} \sim \frac{a_k}{a_{k+1}} \quad \textnormal{and} \quad \frac{r_{k+1,n}}{r_{k+2,n}} \sim \frac{a_{k+1}}{a_{k+2}}.
\]
Let $\beta_k = \frac{1}{k}\left(1-\frac{1}{e}\right)^k$. Thus we have to prove
$
\frac{a_k}{a_k - \beta_k} \geqslant \frac{a_k - \beta_k}{a_k - \beta_k - \beta_{k+1}}
$
which is equivalent to
$
a_k \geqslant \frac{\beta_k^2}{\beta_k-\beta_{k+1}}.
$
Let $f(k) = \frac{\beta_k^2}{\beta_k-\beta_{k+1}}$. We need to show that
\begin{equation} \label{eq:}
\sum_{s=k}^{\infty}\frac{1}{s}\left( 1-\frac{1}{e}\right)^s \geqslant f(k).
\end{equation}
One can easily verify that for all $s \geqslant 1$
\[
\frac{1}{s}\left( 1-\frac{1}{e}\right)^s \geqslant f(s) - f(s+1).
\]
Summing both sides of the above inequality over $s \geqslant k$ we obtain
\[
\sum_{s=k}^{\infty}\frac{1}{s}\left( 1-\frac{1}{e}\right)^s \geqslant \sum_{s=k}^{\infty}(f(s)-f(s+1)) = f(k),
\]
where the last equality follows from the fact that $f(n) \xrightarrow{n \to \infty} 0$.
\qed
\end{proof}

\begin{theorem} \label{thm:ratio_rnk}
Let $l \geqslant 1$ be a constant and let $k \in \{2, \ldots, n-l\}$. Let $r_{k,n}$ be defined as in Theorem \ref{thm:kth_cand} and let also $a_l = \sum_{s=l}^{\infty} \frac{1}{s} \left( 1-\frac{1}{e}\right)^s$. Then
\[
\lim_{n \to \infty} \frac{r_{k,n}}{r_{k+l,n}} < \frac{1}{a_{l+1}}.
\]
\end{theorem}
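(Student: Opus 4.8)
The plan is to evaluate $\lim_{n\to\infty} r_{k,n}/r_{k+l,n}$ separately according to the growth rate of $k=k(n)$ and to check that it stays strictly below $1/a_{l+1}$ in every case. Throughout set $x=1-1/e$, so that $a_k=\sum_{s\geq k}x^s/s$, $a_1=1$, and $0<x<2/3$; note that, $l$ being constant, $k+l$ always has the same asymptotic order as $k$, so both indices fall into the same regime.

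If $k=\Theta(n)$, then Lemma \ref{lemma:k_linear} gives $r_{k,n}\sim r_{k+l,n}\sim 1/(en)$, hence the ratio tends to $1$, and $1<1/a_{l+1}$ because $a_{l+1}<a_1=1$. If $k=o(n)$, then Theorem \ref{thm:kconst_limit} gives $r_{j,n}\sim a_j/e$ for $j\in\{k,k+l\}$, so $r_{k,n}/r_{k+l,n}\to a_k/a_{k+l}$; when $k$ is bounded this is the final limit, and when $k\to\infty$ one uses $a_k\sim x^k/(k(1-x))$ to get $a_k/a_{k+l}\to x^{-l}$. So everything comes down to the two inequalities
\[
\frac{a_k}{a_{k+l}}<\frac1{a_{l+1}}\quad\text{for every integer }k\geq 2,\qquad\text{and}\qquad x^{-l}<\frac1{a_{l+1}} .
\]

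The second one reads $a_{l+1}<x^l$, and follows from $a_{l+1}=\sum_{s\geq l+1}x^s/s<\frac1{l+1}\sum_{s\geq l+1}x^s=\frac{x^{l+1}}{(l+1)(1-x)}$ together with $\frac{x^{l+1}}{(l+1)(1-x)}\leq x^l\iff x(l+2)\leq l+1$, which holds for $l\geq1$ since $x<2/3\leq\frac{l+1}{l+2}$. For the first one, put $c_j:=a_j/a_{j+1}$ and use the telescoping identities
\[
\frac{a_k}{a_{k+l}}=\prod_{j=k}^{k+l-1}c_j,\qquad \frac1{a_{l+1}}=\frac{a_1}{a_{l+1}}=\prod_{j=1}^{l}c_j .
\]
The key input is that $c_j$ is non-increasing in $j\geq1$ — this is precisely the inequality $a_j\geq\beta_j^2/(\beta_j-\beta_{j+1})$ with $\beta_j=x^j/j$, established for all $j\geq1$ inside the proof of Lemma \ref{lemma:ratio_rnk_dec} — together with $c_j>1$ for every $j$ (the $a_j$ strictly decrease) and the fact that $c_1=a_1/a_2=e$ while $c_2<e$ (the latter being, after simplification, $(e-1)(3-e)>0$), so that $c_1>c_2$. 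Hence for $k\geq 2$ we have $c_k\leq c_2<c_1$ and $c_{k+i}\leq c_{1+i}$ for $1\leq i\leq l-1$; multiplying these positive factors gives $\prod_{j=k}^{k+l-1}c_j<\prod_{j=1}^{l}c_j$, i.e.\ $a_k/a_{k+l}<1/a_{l+1}$.

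The heart of the matter is the first inequality $a_k/a_{k+l}<1/a_{l+1}$ for $k\geq2$; what makes it manageable is that it is purely a statement about the sequence $(a_j)$, and the monotonicity of the consecutive ratios $c_j=a_j/a_{j+1}$ it needs is already delivered by the proof of Lemma \ref{lemma:ratio_rnk_dec}, so the only genuinely new point is pinning down the strict start $c_1>c_2$, which is what breaks the tie that occurs at $k=1$. For full rigor one should also observe that an arbitrary $k=k(n)$ decomposes, along subsequences, into the regimes above, so the strict bound on the limit holds in general.
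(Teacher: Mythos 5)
Your proof is correct and follows essentially the same route as the paper: both arguments rest on the telescoping product of consecutive ratios together with the monotonicity of $c_j = a_j/a_{j+1}$ (equivalently, of $\lim_{n\to\infty} r_{j,n}/r_{j+1,n}$) supplied by Lemma~\ref{lemma:ratio_rnk_dec}, with strictness injected at the first factor. The differences are presentational: you make explicit two points the paper states tersely, namely the strict start $c_1 > c_2$ (your $(e-1)(3-e)>0$ computation, where the paper simply asserts $\lim_{n\to\infty} r_{k,n}/r_{k+1,n} < \lim_{n\to\infty} r_{1,n}/r_{2,n}$ for $k\geqslant 2$) and the behaviour when $k=k(n)$ grows (your regime split via Lemma~\ref{lemma:k_linear} and Theorem~\ref{thm:kconst_limit}, with the auxiliary bound $a_{l+1} < (1-1/e)^l$ for the $k\to\infty$, $k=o(n)$ case), which the paper handles implicitly inside Lemma~\ref{lemma:ratio_rnk_dec}.
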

\begin{proof}
By Lemma \ref{lemma:ratio_rnk_dec} we have
\begin{align*}
\lim_{n \to \infty} \frac{r_{k+1,n}}{r_{k+l,n}} & = \lim_{n \to \infty} \frac{r_{k+1,n}}{r_{k+2,n}} \cdot \lim_{n \to \infty} \frac{r_{k+2,n}}{r_{k+3,n}} \ldots \lim_{n \to \infty} \frac{r_{k+l-1,n}}{r_{k+l,n}} \\
& \leqslant \lim_{n \to \infty} \frac{r_{2,n}}{r_{3,n}} \cdot \lim_{n \to \infty} \frac{r_{3,n}}{r_{4,n}} \ldots \lim_{n \to \infty} \frac{r_{l,n}}{r_{l+1,n}} = \lim_{n \to \infty} \frac{r_{2,n}}{r_{l+1,n}}.
\end{align*}
Additionally, for $k \geqslant 2$, by Theorem \ref{thm:kth_cand} and Lemma \ref{lemma:ratio_rnk_dec}
\[
\lim_{n \to \infty} \frac{r_{k,n}}{r_{k+1,n}}  < \lim_{n \to \infty} \frac{r_{1,n}}{r_{2,n}}.
\]
Together, by Theorem \ref{thm:kth_cand}, it gives
\begin{align*}
\lim_{n \to \infty} \frac{r_{k,n}}{r_{k+l,n}} & = \lim_{n \to \infty} \frac{r_{k,n}}{r_{k+1,n}} \cdot \lim_{n \to \infty} \frac{r_{k+1,n}}{r_{k+l,n}} \\
& < \lim_{n \to \infty} \frac{r_{1,n}}{r_{2,n}} \lim_{n \to \infty} \frac{r_{2,n}}{r_{l+1,n}} = \lim_{n \to \infty} \frac{r_{1,n}}{r_{l+1,n}} = \frac{1}{a_{l+1}}.
\end{align*}
\qed
\end{proof}


\end{document}